\documentclass[11pt]{article}

\usepackage[square,numbers,sort&compress]{natbib}
\usepackage{setspace}
\usepackage[margin=1in]{geometry}

\usepackage[T1]{fontenc}    
\usepackage[colorlinks=true, allcolors=blue]{hyperref}
\usepackage{hyperref}
\usepackage{url}            
\usepackage{booktabs}       
\usepackage{amsfonts}       
\usepackage{amsmath}
\usepackage{amssymb}
\usepackage{amsthm}
\usepackage{nicefrac}       
\usepackage{microtype}      
\usepackage{xcolor}         
\usepackage{enumitem}
\usepackage{bbold}
\usepackage{graphicx}
\usepackage{bbm}

\usepackage{booktabs}
\usepackage{caption}
\usepackage{subcaption}

\theoremstyle{plain}

\newtheorem{assumption}{Assumption}
\newtheorem{remark}{Remark}
\newtheorem{theorem}{Theorem}
\newtheorem{proposition}[theorem]{Proposition}
\newtheorem{lemma}[theorem]{Lemma}
\newtheorem{corollary}[theorem]{Corollary}

\newcommand{\E}{\mathbb{E}}

\newcommand{\X}{\mathcal X}
\newcommand{\F}{\mathcal F}
\newcommand{\C}{\mathcal C}
\newcommand{\N}{\mathbb N_0}
\newcommand{\Prob}{\mathbb P}
\newcommand{\one}{\mathbbm 1}

\title{Defensive Sufficiency in a Stackelberg Model of AI Security}

\author{
Subhabrata Majumdar$^1$\footnotemark[1] \quad
Rajlakshmi Chavan$^1$ \\[3pt]
\shortstack{
\small{$^1$Indian Institute of Management Bangalore}
}
}
\date{}

\begin{document}

\renewcommand{\thefootnote}{\fnsymbol{footnote}}
\maketitle

\footnotetext[1]{Corresponding author, email: \texttt{smajumdar@iimb.ac.in}\\
Supported by Indian Institute of Management Bangalore Young Faculty Research Grant.}
\renewcommand{\thefootnote}{\arabic{footnote}}

\begin{abstract}
Feedback from automated testing, human red teaming, and incident response can strengthen an AI system's defenses when discovered failures lead to effective repairs. We study when this feedback process provides sufficient protection and when investing in it is economically worthwhile. We begin by showing that an attack surface composed of   finite number of inputs is defended with probability 1 if every unresolved attack has a  persistent chance of discovery, repairs are effective, and subsequent updates preserve earlier protection. We derive completion-time bounds and extend the analysis to growing attack surfaces, repairs that generalize across related attacks, and multiple discovery mechanisms. These results distinguish eventual protection against each fixed attack from
complete protection at a single time. We then formulate a defender-led Stackelberg game in which the defender invests in proactive discovery and reactive repair, anticipating the attacker's choice of search effort. We characterize the least-cost allocation that deters attack and the equilibrium regimes in which the defender funds neither capability, one capability, or both. Numerical experiments illustrate these regimes and show how faster repair can reduce compromise duration without reducing compromise probability.
\end{abstract}

\section{Introduction}
\label{sec:intro}

Compared to traditional software, generative AI systems have an expanded attack surface. Alongside vulnerabilities in connected tools, software, infrastructure, and access control, defenders must deal with attacks that exploit how the large language model (LLM) inside the AI system interprets and acts on input information \citep{vassilev2025taxonomy}. For example, consider an AI assistant that reads
documents and uses tools on a user's behalf. A document can contain both information relevant to the task and instructions intended to redirect the assistant. Processing that document can therefore expose the system to an attack, even when the user's request is
benign. Indirect prompt injection demonstrates this risk in tool-calling LLM agents, including attempts to induce unauthorized actions and disclose private information
\citep{zhan2024injecagent}.

This broader attack surface changes what defenders must account for when designing and implementing protection. An attack instance can depend on the wording of a prompt, the preceding conversation, retrieved content, and the tools or permissions available to the model. Increasing a system's capabilities can also introduce new ways to attack it. Many-shot jailbreaking, for example, exploits longer context windows by supplying demonstrations that steer a model toward behavior its safeguards are intended to prevent
\citep{anil2024many,russinovich2025crescendo,mehrotra2024tap}. Thus, protection established for short, isolated prompts may not be enough for longer interactions or a deployment with tools enabled.

While the familiar security practice of discovering vulnerabilities, applying repairs, and verifying updates remains necessary to secure LLM-driven deployments \citep{souppaya2022patch}, the expanded attack surface requires us to examine more carefully what it takes to make these steps sufficient. Blocking one malicious prompt may still leave its paraphrases or similar prompts as effective attack vectors. A repair may involve input or output filters, safety training, retrieval controls, tool permissions, or application logic---its effectiveness depending on how these components interact inside the AI system. Subsequent changes can also weaken earlier protection; for
example, further fine-tuning can disrupt safety alignment \citep{qi2025safety}. On top of traditional software failures from insufficient testing and regression, LLM-driven systems add a broad space of context-dependent behaviors over which the defender must assess the repair of vulnerabilities found and the retention of those repairs.

Automated testing, human red teaming, and incident response can expose failures that inform later defenses through feedback. Methods such as MART combine repeated attack generation with safety training \citep{ge2024mart}, while a lifecycle view of AI red teaming emphasizes evaluation throughout development and deployment \citep{majumdar2025red}. These activities motivate
a research question: 

\begin{quote}
{\it Under what conditions does continued discovery and repair provide sufficient protection?}
\end{quote}

\noindent We study this question by representing a defense through the set of attacks against which it is effective. This separates three requirements: testing must continue to reach unresolved attacks, reported vulnerabilities must be repaired, and earlier
protection must be retained. Under explicit conditions on these requirements, we establish almost-sure complete coverage of a finite attack surface and bound the time required. We extend the analysis to growing attack surfaces, repairs that cover families of related attacks, and multiple discovery mechanisms.

The distinction between finite and growing attack surfaces is central to interpreting defensive sufficiency. On a countably infinite surface, every fixed attack can eventually be covered even though some attack remains uncovered at every finite time. A persistent attacker may continue to seek those remaining opportunities. Thus, eventual protection against each attack does not by itself establish that a deployed system faces
negligible current risk.

This leads to our second research question:

\begin{quote}
{\it How much should a defender invest in discovery and repair to deter an adaptive attacker?}
\end{quote}

\noindent A defender can fund proactive discovery to prevent compromise, or reactive repair to shorten its duration. Both choices affect the return an attacker expects from searching for a vulnerability. We formulate a defender-led Stackelberg game in which the defender commits to these capabilities before the attacker chooses its search effort. Our model characterizes the optimal investment regimes and the least-cost allocation that deters attack. Numerical experiments illustrate how the relative costs of attack and defense shape these decisions, and why reducing compromise duration can be valuable even when compromise remains possible. Finally, we emphasize the need for verification mechanisms to ensure the continued validity of discovery and repair. As an illustration, we implement a machine-checkable verification layer in Lean on top of the garak LLM scanner \cite{garak} and share it through an open-source repository\footnote{\url{https://github.com/rajlakshmichavan/garak-in-lean}}.


\section{Related Work}
\label{sec:related-work}

Our work connects two questions relevant to a defender: when repeated testing and repair
provide defensive coverage to an AI system, and how a defender should invest in the
capabilities that support this process. The first relates to red teaming, safety training, and defense guarantees. The second relates to security investment and attacker--defender games.

\paragraph{Finding failures and learning from them.}
Automated red teaming uses a model to generate inputs that expose failures in another model. \citet{perez2022red} study several ways to generate such tests and examine their effectiveness and diversity, \citet{hong2024curiosity} introduce curiosity-driven exploration to improve the coverage of generated attacks, and garak \cite{garak} and PyRIT \cite{pyrit} are two popular open-source LLM red teaming tools. These approaches address an important component of our framework: whether testing continues to reach vulnerabilities
that remain unresolved. Greater empirical diversity, however, does not establish the persistent discovery condition used in our coverage results.

MART~\citep{ge2024mart} and rainbow teaming \cite{rainbowteaming} make the connection between discovery and repair explicit by alternating adversarial prompt generation with safety fine-tuning of the target model. HarmBench~\citep{mazeika2024harmbench} provides an evaluation framework and an adversarial training method to improve refusal robustness. It is also easy to update input or output guardrails through continued training on new attack data \cite{vijil-mbert, ayub2024embeddingbasedclassifiersdetectprompt}. Our model abstracts these update procedures into effective repair and retention assumptions. It asks what follows when discovered failures are repaired without losing earlier protection. Retention requires separate justification: \citet{qi2025safety} show how shallow safety alignment can be disrupted by subsequent fine-tuning and study ways to make it more persistent. At the system level, \citet{majumdar2025red} argue for red teaming throughout the AI lifecycle and beyond isolated model failures. Our framework gives a mathematical account of one part of that process, namely the accumulation of protection from discovery and repair.

\paragraph{Scope of coverage guarantees.}

Specifying the domain of permissible attacks drives what a protection guarantee means for the defender. For example, many-shot jailbreaking~\citep{anil2024many} and multi-turn attacks \cite{russinovich2025crescendo,mehrotra2024tap} demonstrate how longer contexts enable new attacks.  On the other hand, formal robustness guarantees can restrict the allowable changes to an input. \citet{kumar2024certifying} introduce erase-and-check, which certifies protection against specified token-level attacks up to a bounded size, subject to the underlying
safety filter's detection of the harmful prompt. This motivates our treatment of an expanding attack surface, where eventual protection against each fixed attack need
not imply complete coverage at any finite time.

In recent work, \citet{vassilev2026robust} shows through incompleteness arguments that no fixed set of guardrails can protect an AI system against all possible attacks. In contrast, we permit the update of defensive guardrails to an AI system. This is a reasonable assumption to make in real-world security practices, where defenses are regularly updated to patch an ongoing feed of newly-found vulnerabilities.

\paragraph{Security investment and recovery.}

The economic value of protection has substantial literature. \citet{gordon2002economics} formulate security investment as a tradeoff between expenditure and expected breach loss.
\citet{ebel2024economics} incorporate a strategic attacker in a defender-led Stackelberg model, characterize optimal investment decisions, and analyze a price of deterrence. Their work already establishes the importance of distinguishing the investment needed to discourage attack from the investment a defender finds optimal. We make discovery \textit{before} compromise and repair \textit{after} compromise separate investment choices, and model attacker response explicitly. This leads to a convex joint investment problem, from which we obtain resource allocation regimes and least-cost deterrent rates.


\paragraph{Stackelberg models in security.}

Our leader--follower structure follows the security-game literature, where a defender commits to a strategy and an attacker responds after observing it~\citep{korzhyk2011stackelberg}. A close application to AI security is the work of
\citet{han2026dynamic}. They model LLM jailbreak interactions as an extensive-form game, combine it with exploration of prompt space, and use Stackelberg reasoning to guide an agentic defense. While their decisions concern prompts and defensive responses \textit{during} LLM interactions, our decisions concern the discovery and repair capabilities purchased \textit{before} the attacker chooses its search effort.


\section{A Model of Testing and Repair}
\label{sec:setting}

Consider an AI system $S$ that is attacked by an adversary and defended by a defender. Let $\X$ be the set of all attack instances. An instance may be a prompt, a conversation, or a prompt together with the environment in which it is used. For example, the same instruction can have different consequences depending on whether the model can invoke a tool. We consider a turn-based setup, where at the beginning of turn $t$, $X_t\subseteq\X$ is the
\emph{active attack surface}: the attacks currently admitted by the defended system $S \oplus D_t$ with $D_t\subseteq\X$ containing the attacks against which
the current defense is effective. We call these attacks \emph{covered}.
The remaining attacks form the residual vulnerable set
\begin{equation}
 V_t=X_t\setminus D_t.
\end{equation}
Thus an attack belongs to $V_t$ when it is both active and still
unresolved. Initially we keep $X_t$ fixed. Later, we allow it to
grow as longer interactions become available.

The definition of effective protection is fixed throughout a given
model. For a deterministic system, it might mean that an attack no
longer produces the specified harmful outcome. For a randomized
system, it might mean that the probability of that outcome is below
a chosen threshold. The set $D_t$ contains the attacks for which this property holds.

\subsection{Attack discovery}
In each round, the attacker sends a few attack instances to the defended system. In that process, the defender discovers a few undefended instances. Call them  $Z_t\subseteq X_t$, where $\varnothing$ represents no new discovered attacks. After processing the report, the defense updates from $D_t$ to $D_{t+1}$. The update may repair the reported attack alone or may protect against additional attacks as well.

We impose two requirements on how the defender incorporates these updates.

\begin{assumption}
\label{ass:update}
For every round, almost surely,
\begin{enumerate}[leftmargin=*,label=(\roman*),nolistsep]
    \item Effective remediation: A reported vulnerability is repaired by the next round, i.e. $Z_t \cap V_t \subseteq D_{t+1}$.
    \item Retention: Subsequent rounds preserve protection already achieved: $ D_t \subseteq D_{t+1}$.
\end{enumerate}
\end{assumption}
A detector may correctly identify a failure without there being an effective patch, and a stored regression test may remain in a database even after the defense starts failing it again. The assumption requires actual repair and actual preservation of that repair.

%
Let $\F_t$ denote all information available right before round $t$. 
Formally, we work on a probability space $(\Omega,\mathcal F,\Prob)$ with the sequence $(\F_t)_{t\in\N}$ being a filtration, that is, an increasing
sequence of information sets. For each attack $x$, membership in $D_t$ is $\F_t$-measurable;
$Z_t$ and $D_{t+1}$ are measurable at the next round. Unless stated
otherwise, the surfaces $X_t$ are deterministic, finite, and
nondecreasing. 

\subsection{Evolving defenses}
\label{sec:static}

For a defense $C$, write $D(C)$ for its covered set. A static incompleteness statement for a class $\C$ has the form
\begin{equation}
 \forall C\in\C,\qquad \X\setminus D(C)\ne\varnothing,
 \label{eq:static}
\end{equation}
that is, every static defense misses at least one attack instance. A version of this result was introduced by \citet{vassilev2026robust}, who proved that no static certifier can cover all possible input prompts violating a particular policy.

We overcome this limitation of static defenses by using feedback from previous rounds to progressively strengthen defenses. This is a more realistic setting for real life guardrails. Suppose $\X=\{x_1,x_2,\ldots\}$ and allowed defenses cover
finite subsets. At round $t$, take
$D_t=\{x_1,\ldots,x_t\}$. Every defense in this sequence is
incomplete, but each $x_i$ is protected from round $i$ onward.
The uncovered attack can change as the defense improves. 

For a finite attack surface, the conclusion is different. Once every
attack has a finite repair time, there is a last repair time.
\begin{proposition}
\label{prop:finitecore}
Let $X \subseteq \mathcal X$ be finite and $(D_t)$ be monotone. If every $x\in X$
belongs to some $D_{t_x}$, then there is a finite $T$ such that
$X\subseteq D_T$. Thus this property cannot hold for a sequence in
a checker class all of whose members leave an element of $X$ uncovered.
\end{proposition}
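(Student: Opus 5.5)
The argument is a direct use of finiteness together with monotonicity. We read ``$(D_t)$ monotone'' as $D_t\subseteq D_{t+1}$ for all $t$, which is exactly the retention condition of Assumption~\ref{ass:update}(ii). By induction on $s$, this gives $D_t\subseteq D_{t+s}$ for every $s\ge 0$, so $D_t\subseteq D_{t'}$ whenever $t\le t'$.

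Next we pick the last repair time. For each $x\in X$ the hypothesis supplies an index $t_x$ with $x\in D_{t_x}$. Since $X$ is finite, the set $\{t_x: x\in X\}$ is a finite set of natural numbers, so it has a maximum. Set $T=\max_{x\in X} t_x$, and take $T=0$ if $X=\varnothing$, in which case the claim is trivial. Then $t_x\le T$ for every $x\in X$, and monotonicity gives $x\in D_{t_x}\subseteq D_T$. Hence $X\subseteq D_T$.

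The second sentence follows by contraposition. Suppose every member $C$ of a checker class $\mathcal C$ satisfies $X\setminus D(C)\neq\varnothing$, which is the static incompleteness property \eqref{eq:static} restricted to $X$. Suppose also that the sequence of defenses $(C_t)$ lies in $\mathcal C$ with $D_t=D(C_t)$. If every $x\in X$ were eventually covered, the first part would give $X\subseteq D(C_T)$, contradicting $X\setminus D(C_T)\ne\varnothing$. So no such monotone sequence can cover each element of $X$ eventually.

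No step is a genuine obstacle. The one point needing care is that finiteness is used only to guarantee that the maximum exists. This is exactly what fails in the countable example $D_t=\{x_1,\dots,x_t\}$ above, where $\sup_x t_x=\infty$. In a stochastic reading, where $t_x$ is random, we would apply the same argument pathwise on the almost-sure event where every $t_x$ is finite; the intersection of finitely many such events still has probability one.
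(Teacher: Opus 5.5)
Your proof is correct and follows the paper's argument essentially verbatim: take $T=\max_{x\in X}t_x$ (or $T=0$ when $X=\varnothing$), use monotonicity to get $X\subseteq D_T$, and obtain the contradiction by applying the incompleteness property to the defense at time $T$. Your added remarks, the induction giving $D_t\subseteq D_{t'}$ for $t\le t'$ and the pathwise reading when the $t_x$ are random, are sound but go beyond what the paper's proof uses.
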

\begin{proof}
If $X$ is empty, choose $T=0$. Otherwise choose one finite witness
$t_x$ for each $x\in X$ and set $T=\max_{x\in X}t_x$.
Monotonicity gives $x\in D_T$ for every $x\in X$. Applying the
claimed incompleteness property to the defense at time $T$ yields a
contradiction.
\end{proof}
The proposition explains how to interpret our first coverage theorem.
If updates achieve complete coverage on a finite set $X$, those
updates cannot all belong to a class whose every member is incomplete
on that same $X$. There is no contradiction when the finite set is
only a restricted part of a larger universe, or when the update
process is allowed a different class of defenses. Keeping the domain
and checker class explicit resolves the apparent tension.


\section{Complete Coverage of a Finite Attack Surface}
\label{sec:finite}
We now return to the testing process and first keep its attack surface
fixed ($X_t = X \forall t$) and finite ($N := |X| < \infty$). This could represent a bounded family of interactions against
which a defense is to be evaluated. The aim is to understand when
the residual set $V_t$ eventually becomes empty and how long this
may take. To this end, we assume that all attack instances remain discoverable.

\begin{assumption}
\label{ass:discoverable} 
There exists $\epsilon\in(0,1]$ such that
\begin{equation}
 \Prob( x \in Z_t \mid \F_t)\geq\epsilon
 \quad\text{on }\{x\in V_t\},\qquad x\in X.
 \label{eq:finitefair}
\end{equation}
\end{assumption}

Thus, as long as $x$ remains vulnerable, the testing process has a non-zero probability of finding it in the next round. The process can prefer some attacks over others and adapt those preferences over time, provided this lower bound remains available for each unresolved attack.

\begin{theorem}
\label{thm:finite}
Under Assumptions~\ref{ass:update}-\ref{ass:discoverable}, let
$T=\inf\{t\in\N:V_t=\varnothing\}$, with
$\inf\varnothing=\infty$. Then $\Prob(T<\infty)=1$, and for
every integer $t\geq0$,
\begin{equation}
 \Prob(T>t)\leq\min\!\left\{ 1, N(1-\epsilon)^t \right\}.
 \label{eq:finitetail}
\end{equation}
Consequently,
\begin{align}
\E[T]
&\leq \sum_{t=0}^\infty \min\!\left\{1, N(1-\epsilon)^t\right\}
\leq \frac{N}{\epsilon}.
\label{eq:finiteexpectation}
\end{align}
\end{theorem}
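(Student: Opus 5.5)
The plan is a union bound over the $N$ attacks, combined with a per-attack geometric survival bound; this yields the tail estimate \eqref{eq:finitetail}, and the remaining claims follow from it. Since $X_t=X$ is fixed, $V_t=X\setminus D_t$. Retention (Assumption~\ref{ass:update}(ii)) makes $(D_t)$ nondecreasing, so $(V_t)$ is nonincreasing. Hence
\[
\{T>t\}=\{V_t\neq\varnothing\}=\bigcup_{x\in X}\{x\in V_t\}
\]
(with $\{T>t\}\subseteq\{V_t\ne\varnothing\}$ being all we actually need), and therefore $\Prob(T>t)\le\sum_{x\in X}\Prob(x\in V_t)$. It then suffices to prove, for each fixed $x\in X$ and each $t\ge0$,
\[
\Prob(x\in V_t)\le(1-\epsilon)^t .
\]

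For this per-attack bound I would argue by induction on $t$; the case $t=0$ is trivial. By retention, $x\in V_{t+1}$ forces $x\in V_t$. By effective remediation (Assumption~\ref{ass:update}(i)), $x\in V_t\cap Z_t$ forces $x\in D_{t+1}$, that is, $x\notin V_{t+1}$. So, almost surely,
\[
\{x\in V_{t+1}\}\subseteq\{x\in V_t\}\cap\{x\notin Z_t\}.
\]
The event $\{x\in V_t\}$ is $\F_t$-measurable, because membership in $D_t$ is. Conditioning on $\F_t$ and applying Assumption~\ref{ass:discoverable} on $\{x\in V_t\}$ gives
\[
\Prob(x\in V_{t+1})\le\E\big[\one\{x\in V_t\}\,\Prob(x\notin Z_t\mid\F_t)\big]\le(1-\epsilon)\Prob(x\in V_t).
\]
This closes the induction. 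The main point needing care is this conditioning step: one must check that the measurability conventions stated in Section~\ref{sec:setting} make $\{x\in V_t\}$ an $\F_t$-event and $Z_t$ measurable at round $t+1$, so the tower property applies. I expect no other obstacle.

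Summing over $x$ gives $\Prob(T>t)\le N(1-\epsilon)^t$, and the cap at $1$ is trivial, which proves \eqref{eq:finitetail}. If $\epsilon=1$ the bound vanishes for $t\ge1$; otherwise it tends to $0$. Since $\{T=\infty\}\subseteq\{T>t\}$ for every $t$, we get $\Prob(T=\infty)=0$.

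For \eqref{eq:finiteexpectation}, I would use $\E[T]=\sum_{t\ge0}\Prob(T>t)$ for the $\N\cup\{\infty\}$-valued $T$ and insert the tail bound. To bound the sum by $N/\epsilon$, the simplest route is termwise: $\min\{1,a_t\}\le a_t$ with $a_t=N(1-\epsilon)^t$. The geometric series gives
\[
\sum_{t\ge0}N(1-\epsilon)^t=\frac{N}{1-(1-\epsilon)}=\frac{N}{\epsilon},
\]
and for $\epsilon=1$ the sum is just the single term $N$. This already gives the stated inequality, so the sharper split at $t_0=\lceil\log N/\log(1/(1-\epsilon))\rceil$ is unnecessary, though I would remark that it yields a logarithmic improvement.
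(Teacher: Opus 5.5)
Your proposal is correct and follows essentially the same route as the paper: a per-attack geometric survival bound $\Prob(x\in V_t)\le(1-\epsilon)^t$ obtained by conditioning on $\F_t$ with remediation and retention, then a union bound over the $N$ attacks, and finally $\E[T]=\sum_{t\ge0}\Prob(T>t)$ bounded by the geometric series. Your explicit inclusion $\{x\in V_{t+1}\}\subseteq\{x\in V_t\}\cap\{x\notin Z_t\}$ is a slightly more careful rendering of the paper's conditional inequality $\Prob(B_{t+1}(x)\mid\F_t)\le\one_{B_t(x)}(1-\epsilon)$, not a different argument.
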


\begin{proof}
(See supplementary material for a Lean-verified version) 
We first follow one attack $x\in X$. Write $B_t(x)=\{x\in V_t\}$ for the event that it is still vulnerable at the beginning of round $t$. If $B_t(x)$ occurs, by Assumption~\ref{ass:discoverable} there is at least $\epsilon$ chance of discovering $x$, and effective remediation (Assumption~\ref{ass:update}(i)) then removes it before the next round.
If $B_t(x)$ does not occur, retention ensures that the attack remains
covered. Combining these two cases gives
\begin{equation}
 \Prob(B_{t+1}(x)\mid\F_t)
 \leq\one_{B_t(x)}(1-\epsilon).
 \label{eq:singleattack}
\end{equation}
Here $\one_{B_t(x)}$ is one if $x$ is vulnerable and zero otherwise.
Other updates may also happen to repair $x$, which can only improve
this bound.

Averaging~\eqref{eq:singleattack} over possible histories yields $\Prob(B_{t+1}(x))\leq(1-\epsilon)\Prob(B_t(x))$. Applying the same inequality repeatedly, starting with
$\Prob(B_0(x))\leq1$, gives $\Prob(x\in V_t)\leq(1-\epsilon)^t$. This calculation uses conditional probabilities, and therefore does not require successive rounds to be independent.

We next consider the whole surface. Because $X$ is fixed and covered attacks stay covered, $T>t$ holds exactly when $V_t$ is nonempty. The probability that at least one attack remains is no greater than the sum of the individual survival probabilities:
$$ \Prob(T>t) = \Prob(V_t\ne\varnothing)
\leq \sum_{x\in X}\Prob(x\in V_t) 
\leq N(1-\epsilon)^t, $$
giving us \eqref{eq:finitetail}. As $t$ increases, the bound tends to zero, so the probability of
never completing the repairs is zero.

Finally, the number of rounds before completion can be written as $T=\sum_{t\geq0}\one_{\{T>t\}}$, so that $\E[T]=\sum_{t\geq0}\Prob(T>t)$. Substituting the tail bound and summing the geometric series proves~\eqref{eq:finiteexpectation}.
\end{proof}


A useful probability guarantee comes directly from the tail bound in Eq.~\eqref{eq:finitetail}. Since
$1-u\leq e^{-u}$, we have $\Prob(T>t)\leq\min\{1,N \exp (-\epsilon t) \}$. Thus, for any
$\delta\in(0,1)$, choosing
\begin{equation}
 t\geq\left\lceil\frac{\log(N/\delta)}{\epsilon}\right\rceil
 \label{eq:confidence}
\end{equation}
ensures at least probability $1-\delta$ of covering the entire attack surface by round $t$.

\begin{remark}
There are two simple boundary considerations. First, if $D_0$
is deterministic, only initially unresolved attacks need to be
counted, so $N$ may be replaced by $|X\setminus D_0|$. Second, if $X=\varnothing$, then $T=0$; more generally,
completion is immediate whenever $X\subseteq D_0$.
Note also that the theorem separates eventual coverage from the speed of improvement. Even a small $\epsilon>0$ is enough for eventual repair, but it may require a large number of rounds. Thus, effective exploration is important as it shortens this waiting time.
\end{remark}

\section{Coverage under Growing Contexts}
\label{sec:long}
State of the art foundation models allow for very long context windows. To model this, we now allow context to expand infinitely in our setup. Increasing the
maximum length of an interaction admits attacks that could not have been expressed within the earlier limit. In this section, we first consider a generic extension of Theorem~\ref{thm:finite} under a growing context window, then generalize it further under realistic defense and discovery mechanisms. 

\subsection{Complete coverage}
Let $X^{(1)}\subseteq X^{(2)}\subseteq\cdots$ be finite attack
surfaces, with the index $\ell$ representing a context-length limit. Write
$X^{(\infty)}=\bigcup_{\ell\geq1}X^{(\ell)}\subseteq\X$. Since each set is finite, their union is countable. Take a nondecreasing sequence of indices $ \{L_t: 1 \leq t < \infty \}$, and set $X_t=X^{(L_t)}$. The \emph{admission time}
\begin{equation}
 a(x)=\min\{t:x\in X_t\}
\end{equation}
is the first round in which attack $x$ is available. It is finite for every $x\in X^{(\infty)}$. Note that this construction is literal for attack instances being bounded-length text over a finite vocabulary with $a(x) = |x|$. For conversations or tool-using interactions, an attack instance includes more than the latest user prompt. Additional fields may describe earlier dialogue turns, retrieved documents, tool outputs, or the tools and permissions available to the model. For example, a short instruction may become harmful only when paired with a malicious passage in a retrieved document or an earlier tool response. 

The appropriate conclusion now concerns each attack separately. Once an attack is admitted, we enquire if the defense can eventually protect against it.

\begin{theorem}
\label{thm:long}
Suppose Assumption~\ref{ass:update} holds and each $x\in X^{(\infty)}$ is discoverable, that is, there exists $\epsilon_x\in(0,1]$ such that
\begin{equation}
 \Prob( x \in Z_t \mid\F_t)\geq\epsilon_x
 \quad\text{on }\{x\in V_t\},\qquad t\geq a(x).
\end{equation}
Then, for $t\geq a(x)$,
\begin{equation}
 \Prob(x\notin D_t)\leq(1-\epsilon_x)^{t-a(x)}.
 \label{eq:longtail}
\end{equation}
Moreover,
\begin{equation}
 \Prob\!\left(\begin{gathered}
 \forall x\in X^{(\infty)}\ \exists T_x<\infty:\\
 \forall t\geq T_x,\ x\in D_t
 \end{gathered}\right)=1.
 \label{eq:pointwise}
\end{equation}
Equivalently, $X^{(\infty)}\subseteq\bigcup_tD_t$ almost surely, with equality if the defended sets are themselves restricted to $X^{(\infty)}$.
\end{theorem}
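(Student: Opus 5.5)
The plan is to follow one attack at a time, exactly as in the single-attack step of the proof of Theorem~\ref{thm:finite}, but starting the clock at the admission time $a(x)$ rather than at $0$. Fix $x\in X^{(\infty)}$ and write $B_t(x)=\{x\notin D_t\}$. For $t\geq a(x)$ we have $x\in X_t$, since the surfaces $X_t=X^{(L_t)}$ are nondecreasing. Hence $B_t(x)=\{x\in V_t\}$.

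On $B_t(x)$, the discoverability hypothesis gives conditional probability at least $\epsilon_x$ that $x\in Z_t\cap V_t$. Effective remediation (Assumption~\ref{ass:update}(i)) then places $x$ in $D_{t+1}$. On $B_t(x)^c$, retention (Assumption~\ref{ass:update}(ii)) keeps $x\in D_{t+1}$. Together these give the one-step inequality
\begin{equation*}
\Prob(B_{t+1}(x)\mid\F_t)\leq \one_{B_t(x)}(1-\epsilon_x),\qquad t\geq a(x).
\end{equation*}
Taking expectations and iterating from $t=a(x)$ with $\Prob(B_{a(x)}(x))\leq1$ yields \eqref{eq:longtail}. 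No independence across rounds is needed, only the tower property over the filtration.

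For \eqref{eq:pointwise}, retention makes the events $B_t(x)$ decreasing in $t$. Therefore
\begin{equation*}
\Prob\Big(\bigcap_t B_t(x)\Big)=\lim_t\Prob(B_t(x))\leq \lim_t(1-\epsilon_x)^{t-a(x)}=0,
\end{equation*}
using $\epsilon_x>0$. On the complement there is a first time $T_x$ with $x\in D_{T_x}$, and retention gives $x\in D_t$ for all $t\geq T_x$. This is the step where countability enters, and I expect it to be the only subtle point. Because each $X^{(\ell)}$ is finite, $X^{(\infty)}$ is countable, so the union over $x\in X^{(\infty)}$ of the null events $\bigcap_t B_t(x)$ is still null by countable subadditivity. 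Its complement is exactly the event in \eqref{eq:pointwise}. No uniformity in $x$ is claimed or needed, since the $\epsilon_x$ and $a(x)$ may vary with $x$. This is precisely why the conclusion is pointwise eventual coverage rather than complete coverage at a single time.

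The equivalent formulation follows directly. The event in \eqref{eq:pointwise} says every $x\in X^{(\infty)}$ lies in some $D_t$, so $X^{(\infty)}\subseteq\bigcup_tD_t$ almost surely. The converse inclusion holds trivially when each $D_t\subseteq X^{(\infty)}$, which gives equality in that case. Conversely, $x\in\bigcup_t D_t$ together with monotonicity gives eventual permanent membership, so the two statements agree.
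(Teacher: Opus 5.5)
Your proposal is correct and follows essentially the same route as the paper. Both arguments start the single-attack survival inequality at $a(x)$ and iterate it via the tower property, then use continuity from above for the decreasing events $\{x\notin D_t\}$, countable subadditivity over the countable set $X^{(\infty)}$, and retention to identify membership in $\bigcup_t D_t$ with eventual permanent coverage. Your explicit check that $\{x\notin D_t\}=\{x\in V_t\}$ for $t\geq a(x)$, which follows because the surfaces $X_t$ are nondecreasing, is a detail the paper leaves implicit but is exactly what lets the discoverability hypothesis apply.
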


\begin{proof}
(See supplementary material for a Lean-verified version) 
Fix an attack $x$. Before its admission, the testing process is not
required to find it. Starting at round $a(x)$, it remains
active and the proof argument from Theorem~\ref{thm:finite} applies with discovery bound $\epsilon_x$. After $t-a(x)$ opportunities for repair, it gives
\[
 \Prob(x\notin D_t)\leq(1-\epsilon_x)^{t-a(x)}.
\]
The events $\{x\notin D_t\}$ decrease with time because repairs
are retained. Their probabilities tend to zero, so their intersection,
the event that $x$ is never covered, has probability zero.

It remains to show that this conclusion holds for all attacks on
the same realization of the feedback process. Let $E_x$ be the
event that $x$ is eventually covered. We have just shown that
$\Prob(E_x^c)=0$ for each $x$. Since the universe is countable,
\[
 \Prob\!\left(\bigcup_{x\in X^{(\infty)}}E_x^c\right)
 \leq\sum_{x\in X^{(\infty)}}\Prob(E_x^c)=0.
\]
Retention (Assumption~\ref{ass:update}(ii)) makes each attack's coverage permanent once it occurs. Finally,
membership in $\bigcup_tD_t$ means membership at some finite time,
which under retention is exactly eventual permanent coverage. This
proves the equivalent set formulation as well.
\end{proof}

Each fixed context level also has a finite completion time. The
reason is that it contains only finitely many attacks, even though
the complete universe may be infinite.
\begin{corollary}
\label{cor:local}
Almost surely, for every $\ell$ there is a finite $T_\ell$ such
that $X^{(\ell)}\subseteq D_t$ for all $t\geq T_\ell$.
\end{corollary}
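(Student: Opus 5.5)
The plan is to derive the corollary directly from Theorem~\ref{thm:long} together with retention, using the finiteness of each level $X^{(\ell)}$ in the same way that Proposition~\ref{prop:finitecore} uses finiteness of $X$. Let $\Omega_0$ be the almost-sure event in \eqref{eq:pointwise}: on $\Omega_0$, every $x\in X^{(\infty)}$ has a finite time $T_x(\omega)$ with $x\in D_t$ for all $t\geq T_x(\omega)$. The key feature is that a single null set is discarded once, for all attacks simultaneously. This is possible because Theorem~\ref{thm:long} already handled the countable union, so no further union bound over $\ell$ is required.

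Fix $\omega\in\Omega_0$ and a level $\ell$. Since $X^{(\ell)}\subseteq X^{(\infty)}$ is finite, I define
\[
T_\ell(\omega)=\max_{x\in X^{(\ell)}}T_x(\omega),
\]
with $T_\ell=0$ if $X^{(\ell)}$ is empty. This is a maximum of finitely many finite numbers, so it is finite. For any $t\geq T_\ell(\omega)$ and any $x\in X^{(\ell)}$, we have $t\geq T_x(\omega)$, and therefore $x\in D_t(\omega)$. Hence $X^{(\ell)}\subseteq D_t(\omega)$ for all $t\geq T_\ell(\omega)$. Alternatively, one can take $T_x$ to be the first covering time and invoke monotonicity of $(D_t)$ from Assumption~\ref{ass:update}(ii), which is exactly the argument of Proposition~\ref{prop:finitecore} applied to $X=X^{(\ell)}$. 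Because $\omega\in\Omega_0$ was arbitrary and the construction works for every $\ell$ simultaneously, the statement holds on $\Omega_0$, which has probability one.

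There is no real obstacle. The only point needing care is the order of quantifiers: the statement is ``almost surely, for every $\ell$,'' not ``for every $\ell$, almost surely.'' That order is secured by working on the single event $\Omega_0$ rather than applying a per-level bound. Note also that the corollary does not claim $X^{(L_t)}\subseteq D_t$ at any time: $T_\ell$ may exceed the round at which later levels are admitted, so this result is consistent with the earlier remark that complete coverage of a growing surface need not occur at any finite time.
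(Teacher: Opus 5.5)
Your proposal is correct and matches the paper's proof: work on the single probability-one event in \eqref{eq:pointwise}, take $T_\ell$ to be the maximum of the finitely many witness times $T_x$ over $x\in X^{(\ell)}$ (zero for an empty level), and obtain the conclusion for all levels at once. Your remark on the order of quantifiers is the same point the paper makes with the phrase ``simultaneously for all levels.''
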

\begin{proof}
On the single probability-one event in~\eqref{eq:pointwise}, take
the maximum of the finitely many witness times for each level; use
zero for an empty level. This gives the assertion simultaneously for
all levels.
\end{proof}

\subsection{Robustness toward similar attacks}
\label{sec:regions}
So far, discovering an attack has guaranteed protection against that attack alone. This is a conservative description of what a useful defensive patch $D_{t+1} \setminus D_t$ might accomplish. Several prompts can exploit the same failure mechanism, and changing that \textit{mechanism} can protect against all of them. The next result describes how this kind of generalization can affect coverage.

For a finite attack surface $X$, let $R_1,\ldots,R_m \subseteq \X$ so that $\bigcup_{i=1}^m R_i = X$. We call them \emph{patch regions}. A region represents a family of attacks that can be addressed by the same repair. The regions may overlap because an attack can belong to more than one such family. Define $H_{i,t}=\{R_i\cap V_t\ne\varnothing\}$, the event that region $i$ still contains an unresolved attack. We place the following assumptions on these quantities.

\begin{assumption}
\label{ass:region}
For the patch regions, the following holds.
\begin{enumerate}[label=(\roman*),leftmargin=*,nolistsep]
    \item Discoverability: for an unresolved region, the discovery process has non-zero probability $\eta_i\in(0,1]$ of finding a vulnerable member:
    \begin{equation}
     \Prob(Z_t\cap R_i\cap V_t \neq \varnothing \mid \F_t)\geq\eta_i
     \quad\text{on }H_{i,t}.
     \label{eq:regiondiscover}
    \end{equation}
     \item Effective remediation: discovery leads to a repair covering the whole region:
    \begin{equation}
     Z_t\cap R_i\cap V_t \neq \varnothing\ \Longrightarrow\
     R_i\cap V_{t+1}=\varnothing\quad\text{almost surely}.
     \label{eq:regionrepair}
    \end{equation}
\end{enumerate}
\end{assumption}

Part (i) and (ii) are modified versions of Assumptions~\ref{ass:discoverable} and \ref{ass:update}(i), respectively. These conditions allow us to establish a generalization of Theorem~\ref{thm:finite}.

\begin{theorem}
\label{thm:regions}
Under Assumption~\ref{ass:update} and \ref{ass:region}, $T=\inf\{t:V_t=\varnothing\} < \infty$ almost surely, with
\begin{align}
 \Prob(T>t)&\leq\min\!\left\{1,\sum_{i=1}^m(1-\eta_i)^t\right\},
 \label{eq:regiontail}\\
 \E[T]&\leq\sum_{i=1}^m\eta_i^{-1}.
\end{align}
\end{theorem}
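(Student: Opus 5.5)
The plan mirrors the proof of Theorem~\ref{thm:finite}, with patch regions taking the place of individual attacks. I first follow one region $R_i$ and bound the probability that it still contains an unresolved attack. Suppose $H_{i,t}$ occurs. By Assumption~\ref{ass:region}(i), with conditional probability at least $\eta_i$ the report $Z_t$ hits $R_i\cap V_t$. On that event, Assumption~\ref{ass:region}(ii) gives $R_i\cap V_{t+1}=\varnothing$. Now suppose $H_{i,t}$ fails, so $R_i\cap V_t=\varnothing$. Then every $x\in R_i\cap X$ lies in $D_t$, and retention (Assumption~\ref{ass:update}(ii)) with the fixed surface $X_{t+1}=X$ keeps it in $D_{t+1}$, so $H_{i,t+1}$ also fails. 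Combining the two cases gives
\[
 \Prob(H_{i,t+1}\mid\F_t)\leq \one_{H_{i,t}}(1-\eta_i).
\]
Taking expectations and iterating from $\Prob(H_{i,0})\leq 1$ yields $\Prob(H_{i,t})\leq(1-\eta_i)^t$. As before, this uses only conditional probabilities and needs no independence across rounds.

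Next I pass to the whole surface. Since $\bigcup_i R_i = X$ and $V_t\subseteq X$, the set $V_t$ is nonempty exactly when some $H_{i,t}$ occurs. Retention again makes $\{V_t\neq\varnothing\}$ decreasing in $t$, so $\{T>t\}=\{V_t\neq\varnothing\}=\bigcup_i H_{i,t}$. A union bound over the $m$ regions then gives \eqref{eq:regiontail}. The bound tends to zero, so $\Prob(T=\infty)=0$. Finally, $\E[T]=\sum_{t\ge0}\Prob(T>t)\leq\sum_i\sum_{t\ge0}(1-\eta_i)^t=\sum_i\eta_i^{-1}$.

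I expect the main obstacle to be the measurability and ``on the event'' bookkeeping in the one-step inequality, not the arithmetic. First, $H_{i,t}$ must be $\F_t$-measurable. This follows because $X$ is finite and each membership $x\in D_t$ is $\F_t$-measurable, so $H_{i,t}$ is a finite union of such events. Second, the almost-sure implication \eqref{eq:regionrepair} has to be turned into the conditional bound
\[
 \Prob(H_{i,t+1}\mid\F_t)\leq 1-\Prob(Z_t\cap R_i\cap V_t\neq\varnothing\mid\F_t)
\]
on $H_{i,t}$. Third, one point needs care: the regions are subsets of $\X$ and may contain elements outside $X$. I will intersect with $X$ throughout, which is harmless because $V_t\subseteq X$. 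I also note that overlapping regions cause no trouble, since the union bound does not need disjointness.
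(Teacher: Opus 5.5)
Your proposal is correct and follows the paper's proof essentially step for step: the one-step bound $\Prob(H_{i,t+1}\mid\F_t)\leq\one_{H_{i,t}}(1-\eta_i)$ from discoverability, region-wide repair and retention, iteration to $(1-\eta_i)^t$, the identity $\{T>t\}=\bigcup_i H_{i,t}$ with a union bound, and summing the tail to bound $\E[T]$. Your extra measurability bookkeeping is sound, and your worry about regions containing elements outside $X$ is moot, since $\bigcup_i R_i = X$ already forces each $R_i\subseteq X$.
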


\begin{proof}
(See supplementary material for a Lean-verified version) 
Fix a region $R_i$. If it has already been cleared, retention keeps
it clear. If it still contains a vulnerability, there is probability
at least $\eta_i$ of discovering a vulnerable member and thereby
clearing the whole region. Therefore
\[
 \Prob(H_{i,t+1}\mid\F_t)\leq\one_{H_{i,t}}(1-\eta_i).
\]
Taking expectations and repeating this inequality across rounds gives
$\Prob(H_{i,t})\leq(1-\eta_i)^t$.

Since the regions cover $X$, any remaining vulnerable attack must
belong to an unresolved region. Consequently,
$\{T>t\}=\bigcup_{i=1}^m H_{i,t}$. Adding the region survival
probabilities and bounding the result by one proves
\eqref{eq:regiontail}. The bound tends to zero, establishing
almost-sure completion. Summing it over rounds, as in the finite
theorem, gives $\E[T]\leq\sum_i1/\eta_i$. Regions that are
empty or initially cleared may be left out of these sums.
\end{proof}

Theorem~\ref{thm:finite} is a special case where regions are singleton sets. With larger regions, a patch can complete several repairs after one discovery, with the gain depending on the number of regions and their discoverability. A large region that is rarely encountered may still take a long time to repair.

As an example, suppose $N=mr$ initially vulnerable attacks are partitioned into $m \geq 1$ regions of size $r > 1$ each. Each round independently samples one of the $N$ attacks uniformly. If a repair covers only the sampled attack, the expected completion time is $NH_N, H_n:=\sum_{k=1}^n1/k$. When $k$ attacks are unresolved, a sample finds one of them with probability $k/N$, so the expected wait for the next new repair is $N/k$. Adding these waiting times as $k$ decreases from $N$ to one gives $NH_N$. For region-level repairs, the same argument applies to each of the $m$ regions, giving $mH_m$. 


\begin{remark}
It is possible to weaken the assumption of perfect repair over a region (Assumption~\ref{ass:region}(ii)). Suppose discovery in an unresolved region occurs with probability at least $\eta_i$, and, conditional on this discovery and the preceding history, the patch clears the region with probability at least $r_i>0$. The chance of
both discovering and clearing the region is then at least $\eta_i r_i$. Provided unsuccessful patches preserve earlier protection, the same proof works with $\eta_i r_i$ in place of $\eta_i$. Repeated opportunities can therefore compensate for imperfect repair, although they increase the waiting time. The conditional formulation is what justifies the product: multiplying unrelated marginal discovery and repair rates would not suffice.
\end{remark}

\subsection{Several discovery mechanisms}
\label{sec:mechanisms}
An organization can receive findings from automated testing, human red teams, and deployment incidents. These discovery mechanisms have their own strengths and blind spots. 
We now ask when these contributions, taken together, supply enough exploration for the coverage results.

\subsubsection{Accumulating opportunities for repair}
To begin with, we relax the earlier requirement of a fixed positive discovery
bound in every round. A mechanism may receive less attention over time, or an attack may become progressively harder to discover. The useful quantity is then the accumulation of discovery opportunities across rounds. To study it, we first state the time-varying version of the probability argument used in the preceding proofs.

Let $B_t$ be the event that an attack region remains unresolved at round
$t$, and the conditional chance of resolving it is at least $q_t$. Thus, its conditional chance of surviving that round is at most $1-q_t$, multiplying which across rounds provides the following bound. 
%
\begin{lemma}
\label{lem:survival}
Let $B_t\in\F_t$ for integers $t\geq a$, with $B_{t+1}\subseteq B_t$ up to null sets. Suppose $q_t\in[0,1]$ satisfy

\begin{equation}
 \Prob(B_{t+1}\mid\F_t)\leq\one_{B_t}(1-q_t).
 \label{eq:conditional}
\end{equation}
Then, for $n\geq a$,
\begin{align}
\Prob(B_n)
\leq\Prob(B_a) \prod_{t=a}^{n-1}(1-q_t)
\leq\Prob(B_a) \exp \! \left( -\sum_{t=a}^{n-1}q_t \right).
\label{eq:exponential}
\end{align}
If $\sum_{t=a}^{\infty}q_t=\infty$, then
$\Prob(\bigcap_{n\geq a}B_n)=0$.
\end{lemma}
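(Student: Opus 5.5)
The plan is to take expectations of the conditional bound \eqref{eq:conditional} and iterate, exactly as in the proofs of Theorems~\ref{thm:finite} and~\ref{thm:regions}, but now with a round-dependent rate $q_t$. We treat $q_t$ as deterministic constants, as the statement indicates. Taking expectations of both sides of \eqref{eq:conditional} and using the tower property gives
\[
\Prob(B_{t+1})=\E\bigl[\Prob(B_{t+1}\mid\F_t)\bigr]\leq(1-q_t)\,\E[\one_{B_t}]=(1-q_t)\Prob(B_t)
\]
for every $t\geq a$. Since $1-q_t\geq0$, we can chain these inequalities by induction on $n\geq a$. The base case $n=a$ is the empty product, which equals one. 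The inductive step multiplies the bound for $\Prob(B_n)$ by $1-q_n$. This yields the first inequality in \eqref{eq:exponential}.

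The second inequality follows termwise from $1-u\leq e^{-u}$ for $u\in[0,1]$. All factors are nonnegative, so the product of the left-hand sides is at most the product of the exponentials, which is $\exp(-\sum q_t)$.

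For the final claim, we use the nesting hypothesis $B_{t+1}\subseteq B_t$ up to null sets. Setting $B'_n=\bigcap_{a\leq k\leq n}B_k$, we have $\Prob(B'_n)=\Prob(B_n)$, and the sets $B'_n$ decrease to $\bigcap_{n\geq a}B_n$. Continuity from above then gives $\Prob(\bigcap_n B_n)=\lim_n\Prob(B_n)\leq\Prob(B_a)\lim_n\exp(-\sum_{t=a}^{n-1}q_t)$. This limit is zero when $\sum_t q_t=\infty$. We could in fact skip the nesting step and simply note that $\Prob(\bigcap_n B_n)\leq\Prob(B_n)$ for every $n$. The monotonicity is only needed to phrase the result as a limit.

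The only mildly delicate point is the conditional-expectation step. It requires reading \eqref{eq:conditional} as an almost-sure inequality between $\F_t$-measurable random variables and using that $B_t\in\F_t$ makes $\one_{B_t}$ measurable. No independence across rounds is required. Apart from that, everything is routine.
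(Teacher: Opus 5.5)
Your proposal is correct and follows the paper's proof essentially step for step: average the conditional bound using the tower property, iterate from $a$ to $n-1$, apply $1-u\leq e^{-u}$ termwise, and conclude by continuity from above for the (almost surely) decreasing events. Your side remark is also valid: $\Prob(\bigcap_{n\geq a}B_n)\leq\Prob(B_n)$ holds for every $n$, so the final claim follows without the nesting hypothesis.
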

\begin{proof}
Average~\eqref{eq:conditional} over the information available before
round $t$. The tower property of conditional expectation gives
\[
 \Prob(B_{t+1})\leq(1-q_t)\Prob(B_t).
\]
Repeating the inequality from $a$ through $n-1$, then applying $1-u\leq e^{-u}$ to each
factor obtains \eqref{eq:exponential}.

If $\sum_t q_t$ diverges, the exponential bound tends to zero. The events $B_n$ decrease, their intersection describing attacks that always remain unresolved. Continuity of probability for decreasing events identifies its probability with $\lim_n\Prob(B_n)=0$.
\end{proof}

Thus, discovery probability may decrease without preventing eventual repair. A sufficient condition for an unresolved attack region to be repaired eventually with probability 1 is that its cumulative conditional repair-probability lower bound diverges: \(\sum_t q_t=\infty\).

\subsubsection{Combining the mechanisms}
Let $\mathcal K$ be a finite set of discovery mechanisms. In each round, one mechanism $K_t$ is selected with conditional probabilities
\begin{equation}
 \alpha_{k,t}=\Prob(K_t=k\mid\F_t),\qquad
 \sum_{k\in\mathcal K}\alpha_{k,t}=1.
\end{equation}
The probabilities describe how testing effort is allocated and may
depend on the history. For each selected mechanism, let $p_{k,t}(x)$
be an $\F_t$-measurable lower bound on its chance of reporting an
active vulnerable attack $x$, conditional on both $\F_t$ and
$K_t=k$. We only need this bound when $\alpha_{k,t}>0$.
Weighting the mechanism-specific bounds by their selection probabilities
gives
\begin{equation}
 \Prob(x \in Z_t \mid \F_t)\geq
 \sum_{k\in\mathcal K}\alpha_{k,t}p_{k,t}(x)
 \quad\text{on }\{x\in V_t\}.
 \label{eq:mixture}
\end{equation}
Thus a mechanism contributes through both its ability to discover $x$
and the frequency with which it is used. 
We can now apply Lemma~\ref{lem:survival} to the combined process.
\begin{theorem}
\label{thm:mechanisms}
Suppose Assumption~\ref{ass:update} holds, and the discovery probability for a given attack $x$ satisfies ~\eqref{eq:mixture}. Then for $n\geq a(x)$,
\begin{equation}
 \Prob(x\notin D_n)\leq\prod_{t=a(x)}^{n-1}(1-q_t(x)).
 \label{eq:mechanismtail}
\end{equation}
In particular, $\sum_{t=a(x)}^\infty q_t(x)=\infty$ implies eventual permanent coverage of $x$ almost surely.
\end{theorem}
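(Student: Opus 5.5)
The plan is to reduce Theorem~\ref{thm:mechanisms} to Lemma~\ref{lem:survival}, applied to the single attack $x$. First I fix the undefined quantity $q_t(x)$. I take
\[
q_t(x)=\sum_{k\in\mathcal K}\alpha_{k,t}p_{k,t}(x)\wedge 1,
\]
the mixture lower bound in \eqref{eq:mixture}, which lies in $[0,1]$. Lemma~\ref{lem:survival} is stated for deterministic $q_t$, whereas here $q_t(x)$ is $\F_t$-measurable. So I either read the theorem as assuming a deterministic lower bound $q_t(x)\le\sum_k\alpha_{k,t}p_{k,t}(x)$ on $\{x\in V_t\}$, or I adapt the lemma as described below.

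Next I set $B_t=\{x\notin D_t\}$ for $t\ge a(x)$. For $t\ge a(x)$ we have $x\in X_t$, because the surfaces are nondecreasing, so $B_t=\{x\in V_t\}$. Retention (Assumption~\ref{ass:update}(ii)) gives $B_{t+1}\subseteq B_t$, and $B_t\in\F_t$ by the measurability convention. For the one-step bound I argue on $B_t$: by \eqref{eq:mixture}, $x\in Z_t$ has conditional probability at least $q_t(x)$, and effective remediation (Assumption~\ref{ass:update}(i)) then puts $x\in D_{t+1}$. Hence $\Prob(B_{t+1}\mid\F_t)\le 1-q_t(x)$ on $B_t$. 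Off $B_t$, retention makes $B_{t+1}$ null. Together these give exactly \eqref{eq:conditional}, and this is the same two-case argument used in Theorem~\ref{thm:finite}.

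Lemma~\ref{lem:survival} with $a=a(x)$ and $\Prob(B_a)\le1$ then yields \eqref{eq:mechanismtail}. If $\sum_t q_t(x)=\infty$, the lemma's final clause gives $\Prob(\bigcap_n B_n)=0$. So $x\in D_t$ for some finite $t$ almost surely, and retention makes the coverage permanent, which is the claimed eventual permanent coverage.

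The main obstacle is the case where $q_t(x)$ is random, because then the product bound does not follow by simply taking expectations. For the divergence claim I would use the process $M_n=\one_{B_n}\prod_{t=a}^{n-1}(1-q_t)^{-1}$, which is a nonnegative supermartingale when the $q_t<1$. Alternatively, I would use Lévy's conditional Borel--Cantelli lemma: on $\bigcap_n B_n$ we have $\sum_t\Prob(x\in Z_t\mid\F_t)\ge\sum_t q_t=\infty$, so $x$ is discovered almost surely and then repaired, a contradiction. I would present the deterministic case via the lemma and mention this extension in a remark.
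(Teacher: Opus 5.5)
Your proposal is correct and follows the paper's route. Like the paper, you set $B_t=\{x\notin D_t\}$ from the admission time, combine \eqref{eq:mixture} with effective remediation and retention to get the one-step survival inequality, and then invoke Lemma~\ref{lem:survival} for both the product bound \eqref{eq:mechanismtail} and the divergence claim. Your observation that $q_t(x)$ must be a deterministic lower bound for \eqref{eq:mechanismtail} to make sense is a real point that the paper's proof passes over silently. For the random, $\F_t$-measurable case, the lemma's averaging step fails, and your supermartingale or L\'evy conditional Borel--Cantelli argument correctly recovers the almost-sure coverage conclusion.
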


\begin{proof}
(See supplementary material for a Lean-verified version) 
Follow $x$ from its admission time and put $B_t=\{x\notin D_t\}$.
Equation~\eqref{eq:mixture} and the lower bound $q_t(x)$ give a
chance of at least $q_t(x)$ of discovering it whenever $B_t$ occurs.
Effective remediation and retention therefore yield the survival
inequality in Lemma~\ref{lem:survival}. Its product bound gives
\eqref{eq:mechanismtail}, and divergence of the sum implies that $x$
is eventually covered with probability 1.

For a countable surface, intersect the probability-one events of
eventual coverage, one for each attack, as in the proof of
Theorem~\ref{thm:long}. For a fixed finite surface, there are only
finitely many repair times on this event. Taking their maximum, as
in Proposition~\ref{prop:finitecore}, gives finite-time completion
of the entire surface.
\end{proof}


\begin{remark}
Merely including a discovery mechanism in an organization's defensive toolkit does not ensure persistent exploration. A mechanism may be capable of finding an attack yet be used so rarely that its cumulative discovery contribution is finite. Also, with finitely many mechanisms, an infinite combined cumulative contribution for a fixed attack requires at least one mechanism to contribute an infinite amount for that attack. Complementarity allows that mechanism to differ from one attack to another.

The weighted-sum formula describes selection of one mechanism per round. If several mechanisms operate simultaneously, one instead needs the probability that at least one reports the attack. Dependence between their reports must then be accounted for rather than adding their marginal probabilities. An incident mechanism has a further operational interpretation: its report follows an actual failure. It can contribute to preventing recurrence, while leaving the initial harm outside the eventual-coverage guarantee. 
\end{remark}

\section{A Stackelberg Model for Strategic Defense}
\label{sec:stackelberg}

The preceding results identify conditions under which continued testing and
repair ensure protection for the defender. They leave open an economic question:
how much should the defender invest in these capabilities? Investment also changes the attacker's incentives. An exploit that is technically possible may not be worth pursuing if the defender is likely to discover it first or repair it shortly after compromise.

We model this interaction as a defender-led Stackelberg game. The defender first commits to its testing and response capabilities. The attacker observes this commitment and chooses its search effort. The defender anticipates that response when deciding how much
protection to purchase.

\subsection{The discovery race}

Let \(a\geq0\) denote the attacker’s committed search effort and \(r>0\) the discovery rate per unit of effort. Thus, effort \(a>0\) produces an exponential discovery time with rate \(ra\). Let $v>0$ denote the attacker's benefit per unit of time that the exploit remains usable, and $c_A>0$ the cost per unit of committed search effort. Let $L>0$ be the defender's loss per unit of compromise time. 

The defender chooses additional resource amounts $ d=(g,h,i) \in [0,\infty)^3$, which are allocated to automated testing, human red teaming, and incident response, respectively. Given their respective costs per unit $K_G, K_H, K_I >0$, the total investment cost is
\begin{equation}
C_D(d)=k_Gg+k_Hh+k_Ii.
\label{eq:strategic-defense-cost}
\end{equation}
These investments determine the proactive discovery rate $m(d)$
and the repair rate after compromise $\mu(d)$:
\begin{align}
m(d)&=m_0+q_Gg+q_Hh,
\label{eq:strategic-discovery}\\
\mu(d)&=\delta+q_Ii.
\label{eq:strategic-repair}
\end{align}
The coefficients $q_G,q_H,q_I$ are nonnegative and measure the
effectiveness of investment in the respective discovery mechanisms. The positive baselines $m_0$ and
$\delta$ represent discovery and repair capabilities already in
place. Thus, zero additional investment does not mean an absence
of defense.

Automated and human testing contribute to the same proactive
discovery rate. Incident response has a different role: it shortens
the period during which a successful exploit remains usable.
The model therefore distinguishes preventing compromise from
limiting its duration.


After observing $d$, the attacker commits effort $a\geq0$. Its time to discovering and exploiting the vulnerability is Exponential$(ra)$. Independently, the defender discovers and neutralizes the vulnerability at rate $m(d)$. If the defender acts first, the opportunity ends without compromise. If the attacker acts first, the system enters a compromised state. Thus, the time from compromise to effective repair is distributed as Exponential$(\mu(d))$. In this model, $m(d)$ governs the period before compromise and $\mu(d)$ is the total repair rate afterward.

The probability of compromise is
\begin{equation}
p(a,d)=\frac{ra}{ra+m(d)}.
\label{eq:strategic-win}
\end{equation}
Conditional on compromise, the expected time until repair is $1/\mu(d)$, so the expected compromise time per opportunity is
$$ W(a,d)=\frac{1}{\mu(d)} \frac{ra}{ra+m(d)}.$$
%

The attacker pays $c_Aa$ upfront to commit its search effort. So its expected payoff is
$$ U_A(a;d)= \frac{v}{\mu(d)}\frac{ra}{ra+m(d)}-c_Aa. $$
The timing of this cost is part of the model: it is paid for the
chosen effort, rather than accumulated over the random search time.
Choosing $a=0$ gives payoff zero.

\subsection{The attacker's best response}

The attacker weighs the expected benefit of finding the vulnerability
first against the cost of search. The following result makes this
tradeoff explicit.

\begin{theorem}
\label{thm:strategic-attacker}
For every defensive investment $d$, the attacker has a unique
best response
\begin{equation}
a^*(d)=\frac{1}{r}
\left[
\sqrt{\frac{vrm(d)}{c_A\mu(d)}}-m(d)
\right]_+,
\label{eq:strategic-best-response}
\end{equation}
where $[u]_+=\max\{u,0\}$. The attacker is deterred when
$$ a^*(d)=0
\quad\Longleftrightarrow\quad
m(d)\mu(d)\geq\theta, $$
with $\theta := vr/c_A$. The compromise probability and expected compromise time at the best response are
\begin{align*}
p^*(d)
&=\left[1-\sqrt{\frac{m(d)\mu(d)}{\theta}}\right]_+,\\
W^*(d)
&=\frac{1}{\mu(d)}
\left[1-\sqrt{\frac{m(d)\mu(d)}{\theta}}\right]_+.
\end{align*}
\end{theorem}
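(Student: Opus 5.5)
We treat this as a one-dimensional concave maximization in $a$ for fixed $d$. Write $m=m(d)>0$ and $\mu=\mu(d)>0$; both are positive because $m_0,\delta>0$ and all investments and coefficients are nonnegative. Let $B=v/\mu>0$, so that
\[
U_A(a;d)=B\,\frac{ra}{ra+m}-c_Aa,\qquad a\ge0 .
\]
The plan is to differentiate $U_A$, use concavity, and read off the maximizer from the first-order condition.

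\textbf{First-order analysis.} The first derivative is $U_A'(a)=\frac{Brm}{(ra+m)^2}-c_A$. The second derivative is $U_A''(a)=-\frac{2Br^2m}{(ra+m)^3}<0$. So $U_A$ is strictly concave on $[0,\infty)$, and $U_A'$ is strictly decreasing, tending to $-c_A<0$ as $a\to\infty$. Concavity on a half-line gives a unique maximizer, found by two cases.

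\emph{Case $U_A'(0)\le0$.} The maximizer is $a=0$. Here $U_A'(0)=Br/m-c_A$, so this case is exactly $vr/(c_A\mu)\le m$, i.e.\ $m\mu\ge\theta$.

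\emph{Case $U_A'(0)>0$.} The unique interior root solves $(ra+m)^2=Brm/c_A=\theta m/\mu$. This gives $ra+m=\sqrt{\theta m/\mu}$ and hence the formula \eqref{eq:strategic-best-response}.

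The two cases combine into the positive part $[\cdot]_+$. Uniqueness follows from strict concavity. The deterrence equivalence follows because $\sqrt{\theta m/\mu}\le m$ if and only if $m\mu\ge\theta$.

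\textbf{Substituting back.} When the attacker is not deterred, $ra^*+m=\sqrt{\theta m/\mu}$, so
\[
p^*=\frac{ra^*}{ra^*+m}=1-\frac{m}{\sqrt{\theta m/\mu}}=1-\sqrt{\frac{m\mu}{\theta}} .
\]
When the attacker is deterred, $a^*=0$ gives $p^*=0$, which matches the positive part because $1-\sqrt{m\mu/\theta}\le0$ in that case. Finally, $W^*=p^*/\mu$ follows directly from the definition of $W$.

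\textbf{Main obstacle.} There is no real difficulty here. The only care needed is to check that the positive parts agree at the boundary $m\mu=\theta$ and that strict concavity yields uniqueness even when $a^*=0$.
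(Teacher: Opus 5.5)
Your proposal is correct and follows essentially the same route as the paper: strict concavity of $U_A$ in $a$, the sign of $U_A'(0)$ giving the deterrence condition $m\mu\geq\theta$, the first-order condition $ra^*+m=\sqrt{\theta m/\mu}$, and substitution into $p$ and $W$. You also make explicit two points the paper leaves implicit: that $U_A'$ decreases to $-c_A<0$, which guarantees an interior root, and that the positive parts agree at $m\mu=\theta$.
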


\begin{proof}
Fix $d$ and write $m=m(d)$ and $\mu=\mu(d)$. Differentiating
the attacker's payoff gives
\[
\frac{\partial U_A}{\partial a}
=\frac{vrm}{\mu(ra+m)^2}-c_A,
\qquad
\frac{\partial^2U_A}{\partial a^2}
=-\frac{2vr^2m}{\mu(ra+m)^3}<0.
\]
The payoff is strictly concave and tends to $-\infty$ as
$a\to\infty$. Its unique maximizer is zero if the derivative at
zero is nonpositive. Otherwise, it is the positive solution of
the first-order condition. Solving gives
\eqref{eq:strategic-best-response}.

The derivative at zero is nonpositive exactly when
$vr/(\mu m)\leq c_A$, which gives the deterrence condition.
When $a^*>0$, the first-order condition implies
\[
ra^*+m=\sqrt{\frac{\theta m}{\mu}}.
\]
Substitution into \eqref{eq:strategic-win} gives
$p^*=1-\sqrt{m\mu/\theta}$. When $a^*=0$, compromise probability
and expected compromise time are both zero.
\end{proof}

To interpret the deterrence condition $m(d) \mu(d) \geq \theta$, once the product of the discovery and repair rates reaches $\theta$, the attacker’s optimal search effort is zero. Greater attacker value $v$ or search productivity $r$ makes deterrence harder; greater attack cost $c_A$ makes it easier. The defender can reach the threshold by improving proactive discovery, accelerating repair, or combining the two. Faster repair reduces the benefit of winning the discovery race and, therefore, reduces the proactive capability required for deterrence.

We note that attacker effort and defensive outcomes may not move in the same direction. As proactive discovery improves, an active attacker may initially increase effort to remain competitive. Nevertheless, the formulas for $p^*$ and $W^*$ show that compromise probability and expected compromise time are nonincreasing in both defensive rates.

\subsection{The defender's investment problem}

The defender anticipates the attacker's best response and solves
\begin{equation}
d^* \in \operatorname*{arg\,min}_{d\geq0}J(d),\quad
J(d) = C_D(d)+LW^*(d).
\label{eq:strategic-leader}
\end{equation}
Investment cost and expected loss are measured on the same
opportunity basis. In particular, $C_D$ represents the cost
allocated to providing the chosen capabilities for the modeled
opportunity. 

\begin{proposition}
\label{prop:strategic-leader}
The objective in \eqref{eq:strategic-leader} is continuous and
convex on $[0,\infty)^3$ and attains a minimum. Hence a
Stackelberg equilibrium exists. The optimal investment allocation
need not be unique.
\end{proposition}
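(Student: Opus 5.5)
The plan is to reduce everything to a two-variable function of the rates. Since $m(d)$ and $\mu(d)$ are affine in $d$, and $C_D$ is linear, convexity of $J$ will follow from convexity of $\Phi(m,\mu):=\frac1\mu\bigl[1-\sqrt{m\mu/\theta}\bigr]_+$ on the open quadrant $\{m>0,\mu>0\}$. Precomposing a convex function with an affine map preserves convexity. The image of $[0,\infty)^3$ lies in $\{m\ge m_0>0,\ \mu\ge\delta>0\}$.

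\textbf{Convexity.} Write $f(m,\mu)=\mu^{-1}-\theta^{-1/2}m^{1/2}\mu^{-1/2}$, so that $\Phi=\max\{f,0\}$, and set $u=\sqrt{m\mu/\theta}$. The Hessian of $f$ has entries
\[
f_{mm}=\tfrac{c}{4}m^{-3/2}\mu^{-1/2},\qquad
f_{m\mu}=\tfrac{c}{4}m^{-1/2}\mu^{-3/2},\qquad
f_{\mu\mu}=\mu^{-3}\bigl(2-\tfrac34u\bigr),
\]
with $c=\theta^{-1/2}$. Its determinant is
\[
\tfrac{c}{4}m^{-3/2}\mu^{-7/2}(2-u).
\]
Hence $f$ is convex on $\{u<2\}$, but not on all of $\{u\ge 2\}$, so the positive part cannot simply be taken globally.

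The key step, and the main obstacle, is to get around this. I would argue that $\Phi$ is \emph{locally} convex at every point of the open quadrant.
\begin{itemize}
\item Near a point with $u<2$, a small ball stays inside $\{u<2\}$. There $f$ is convex, so $\max\{f,0\}$ is convex on that ball.
\item Near a point with $u>1$, we have $f<0$ on a small ball, so $\Phi\equiv0$ there.
\end{itemize}
These two cases cover the whole quadrant. A continuous function that is locally convex on an open convex set is convex, because its restriction to any segment is locally convex and hence convex in one variable. Therefore $\Phi$ is convex, and so is $J=C_D+L\,\Phi\circ(m,\mu)$.

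\textbf{Continuity and existence.} Continuity of $J$ is immediate, since $\Phi$ is continuous wherever $\mu>0$ and $m>0$. For existence of a minimizer, note that $0\le W^*(d)\le 1/\mu(d)\le1/\delta$. Hence $J(d)\ge C_D(d)\ge \min\{k_G,k_H,k_I\}\,\|d\|_1$. The sublevel set $\{J\le J(0)\}$ is therefore contained in the compact set $\{C_D\le J(0)\}\cap[0,\infty)^3$. A continuous function attains its minimum there, and this minimizer $d^*$ together with the attacker's unique best response $a^*(d^*)$ from Theorem~\ref{thm:strategic-attacker} is a Stackelberg equilibrium.

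\textbf{Non-uniqueness.} It suffices to give an example. If $q_G/k_G=q_H/k_H$, then $J$ depends on $(g,h)$ only through $q_Gg+q_Hh$ and $C_D$ through the same combination. Any optimum with $g+h>0$ can then be shifted along the segment $q_Gg+q_Hh=\text{const}$ without changing $J$, giving a continuum of minimizers.
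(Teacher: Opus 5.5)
Your proof is correct. Its skeleton matches the paper's: reduce to convexity of $\frac{1}{\mu}\bigl[1-\sqrt{m\mu/\theta}\bigr]_+$ in the rates, precompose with the affine maps $m(d),\mu(d)$, obtain a minimizer from $J\ge C_D$ and coercivity, and pair it with the unique best response from Theorem~\ref{thm:strategic-attacker}.

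The real difference is how you handle the kink at $m\mu=\theta$. The paper computes the Hessian only on the active region $s<1$. It then argues along lines: where a line crosses the boundary, the positive part makes the directional derivative jump upward, and at tangency there is no jump. You instead compute the Hessian of the smooth expression $f$ and observe that it stays positive definite on $\{u<2\}$. Near the boundary, $\Phi=\max\{f,0\}$ is therefore a maximum of two convex functions, while on $\{u>1\}$ it vanishes identically. The overlapping open cover plus the local-to-global lemma for convexity on segments then finishes the argument.

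Your route avoids the crossing-versus-tangent case analysis. It also makes explicit the local-to-global step that the paper uses only implicitly. That step is genuinely needed: the active region $\{m\mu<\theta\}$ is not convex, so a segment can leave and re-enter it. Your explicit non-uniqueness example supplies something the paper asserts but never proves.

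To make that example a complete witness, two small points need adding. First, require the common ratio $q_G/k_G=q_H/k_H$ to be positive; if $q_G=q_H=0$, then $C_D$ is not a function of $q_Gg+q_Hh$. Second, exhibit parameters for which some optimum has $g+h>0$. For instance, take $q_I=0$, so that $i=0$ and $\mu=\delta$, with $m_0\delta<\theta$ and $\kappa<A/(2\sqrt{Mm_0})$. Theorem~\ref{thm:strategic-cost-regimes} then gives $m^*>m_0$, and every split of $m^*-m_0$ between $g$ and $h$ is optimal.
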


\begin{proof}
Write the expected loss as
\[
F(m,\mu)=\frac{L}{\mu}
\left[1-\sqrt{\frac{m\mu}{\theta}}\right]_+,
\qquad m,\mu>0.
\]
In the region where the attacker is active, let
$s=\sqrt{m\mu/\theta}<1$. Differentiation gives
\[
F_{mm}=\frac{Ls}{4m^2\mu}>0,
\qquad
F_{\mu\mu}=\frac{L}{\mu^3}
\left(2-\frac{3s}{4}\right),
\]
and
\[
\det\nabla^2F
=\frac{L^2s(2-s)}{4m^2\mu^4}>0.
\]
Thus the Hessian is positive definite in this region.
In the interior of the deterrence region, $F$ is zero.
Along any line crossing the boundary $m\mu=\theta$, the
positive-part operation makes the derivative jump upward;
at tangency, there is no downward jump. The restriction to
every line segment in the positive quadrant is therefore convex.

The rates $m(d)$ and $\mu(d)$ are affine, so composing $F$
with these rates and adding the linear investment cost preserves
convexity. Positive baseline rates ensure continuity. Finally,
\[
J(d)\geq k_Gg+k_Hh+k_Ii.
\]
Because all three cost coefficients are positive, the objective
tends to infinity as investment becomes unbounded. It therefore
attains a minimum. Combining a minimizer with the attacker's
unique best response gives an equilibrium.
\end{proof}

The pair $(d^*,a^*(d^*))$ is a Stackelberg equilibrium, and Proposition~\ref{prop:strategic-leader} proves its existence. Note that deterrence need not be the defender's preferred outcome. Even when sufficient investment could eliminate the attacker's incentive, the defender may find the required capability more expensive than the expected loss it would prevent.

\subsection{Choosing between d mechanisms}


To perform proactive defensive testing, automated testing and human red teaming are substitutes. Given that at least one of $q_G,q_H$ is positive, the least cost of raising the proactive discovery rate by one unit is
\begin{equation}
\kappa=
\min_{\ell\in\{G,H\}:\,q_\ell>0}
\frac{k_\ell}{q_\ell}.
\label{eq:strategic-effective-cost}
\end{equation}
For a desired proactive discovery rate \(m\geq m_0\), the defender must purchase an increase of \(m-m_0\) above its baseline rate. Allocating this increase to the mechanism with the lowest cost per unit of discovery gives a minimum proactive expenditure of \(\kappa(m-m_0)\).



In contrast, proactive testing and incident response affect different parts of the attacker's payoff. Their allocation can be characterized by the marginal reduction in expected loss. When the attacker is active, define
\[
s=\sqrt{\frac{m\mu}{\theta}},
\qquad
b=\frac{Ls}{2m\mu},
\qquad
e=\frac{L}{\mu^2}\left(1-\frac{s}{2}\right).
\]
Here $b=-F_m$ is the benefit of increasing proactive discovery,
and $e=-F_\mu$ is the benefit of increasing repair speed.
Both account for the attacker's adjustment in effort.

When $m\mu>\theta$, set $b=e=0$. At the deterrence boundary,
the appropriate marginal values have the form
\[
b=\tau\frac{L}{2m\mu},
\qquad
e=\tau\frac{L}{2\mu^2},
\qquad \tau\in[0,1].
\]
The same $\tau$ appears in both expressions because the two
marginal values come from a single subgradient of $F$.

By convexity, an investment $d=(g,h,i)$ is optimal if and only
if these marginal values can be chosen to satisfy
\begin{align}
q_Gb&\leq k_G,
&g(k_G-q_Gb)&=0,\nonumber\\
q_Hb&\leq k_H,
&h(k_H-q_Hb)&=0,\nonumber\\
q_Ie&\leq k_I,
&i(k_I-q_Ie)&=0.
\label{eq:strategic-mechanism-conditions}
\end{align}
For each funded mechanism, marginal benefit equals marginal cost. For an unfunded mechanism, marginal benefit is no greater than marginal cost. Depending on these comparisons, the defender may invest in proactive testing, incident response, both, or neither.

\subsection{Defender's investment regimes}

The defender's investment strategy changes based on the values of the proactive investment cost \(\kappa\). To obtain explicit expressions for these cost thresholds, we begin by holding the repair rate \(\mu>0\) fixed.
This is a restriction of the preceding game: the general model allows investment in
repair as well, whereas the following result describes the decision at a given repair capability.

Assume at least one proactive mechanism has positive effectiveness, and let $\kappa$ be its least effective cost from \eqref{eq:strategic-effective-cost}. Define
\[
M=\frac{vr}{c_A\mu},
\qquad
A=\frac{L}{\mu}.
\]
Here $M$ is the proactive discovery rate required for deterrence,
and $A$ is the expected loss conditional on compromise.
After omitting any fixed response cost, the defender's problem is
\begin{equation}
\min_{m\geq m_0}
\left\{
\kappa(m-m_0)+A\left[1-\sqrt{\frac{m}{M}}\right]_+
\right\}.
\label{eq:strategic-scalar}
\end{equation}

\begin{theorem}
\label{thm:strategic-cost-regimes}
If $m_0\geq M$, baseline proactive capability already deters
the attacker and the unique optimal rate is $m^*=m_0$.
If $m_0<M$, the unique optimal proactive rate is
\begin{equation*}
m^*=
\left\{\begin{array}{l@{\qquad}r}
m_0,
&\displaystyle \kappa\geq\frac{A}{2\sqrt{Mm_0}},\\[2mm]
\displaystyle\frac{A^2}{4\kappa^2M},
&\displaystyle \frac{A}{2M}<\kappa<
\frac{A}{2\sqrt{Mm_0}},\\[3mm]
M,
&\displaystyle 0<\kappa\leq\frac{A}{2M}.
\end{array}\right.
\end{equation*}
These cases correspond to no additional proactive investment,
additional investment with continued attack, and full deterrence.
\end{theorem}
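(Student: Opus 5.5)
The plan is to analyze the scalar objective
\[
f(m)=\kappa(m-m_0)+A\left[1-\sqrt{m/M}\right]_+ ,\qquad m\ge m_0,
\]
piece by piece, using convexity to reduce everything to first-order conditions. On $[m_0,\infty)$ split the domain at $M$.

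For $m\ge M$ the penalty term vanishes, so $f$ is strictly increasing with slope $\kappa>0$. Any minimizer therefore lies in $[m_0,\min\{M,\cdot\}]$, and if $m_0\ge M$ the unique minimizer is $m_0$. This disposes of the first claim.

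Assume now $m_0<M$. On $[m_0,M]$ we have
\[
f(m)=\kappa(m-m_0)+A-A\sqrt{m/M}.
\]
This piece is strictly convex, since $-\sqrt{m}$ is strictly convex, and its derivative is
\[
f'(m)=\kappa-\frac{A}{2\sqrt{Mm}},
\]
which is strictly increasing in $m$. Convexity of the whole $f$ follows as in Proposition~\ref{prop:strategic-leader}: the left derivative at $M$ is $\kappa-A/(2M)$ and the right derivative is $\kappa$, so the jump is upward. Combined with strict monotonicity of $f$ on $[M,\infty)$, this means the global minimizer is unique.

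The optimum is then read off from the sign of $f'$ at the two ends of $[m_0,M]$.

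\begin{itemize}
\item If $f'(m_0)\ge 0$, i.e. $\kappa\ge A/(2\sqrt{Mm_0})$, then $f$ is nondecreasing on $[m_0,M]$. It is also increasing beyond $M$, so $m^*=m_0$.
\item If $f'(m_0)<0$ but the left derivative at $M$ is positive, i.e. $A/(2M)<\kappa<A/(2\sqrt{Mm_0})$, the unique stationary point of $f$ solves $\sqrt{Mm}=A/(2\kappa)$. This gives $m^*=A^2/(4\kappa^2M)$, and it lies strictly inside $(m_0,M)$ by the two inequalities on $\kappa$.
\item If $\kappa\le A/(2M)$, then $f'<0$ on $(m_0,M)$, except possibly at the endpoint $M$ where it equals zero. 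So $f$ decreases on $[m_0,M]$ and increases afterward, giving $m^*=M$.
\end{itemize}

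The three regime intervals partition $\kappa>0$ because $m_0<M$ implies $A/(2M)<A/(2\sqrt{Mm_0})$.

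The main point requiring care is uniqueness at the boundary cases, namely $\kappa=A/(2M)$ and $\kappa=A/(2\sqrt{Mm_0})$. I would handle these by noting that $f$ is strictly convex on $[m_0,M]$ and strictly increasing on $[M,\infty)$, so any two minimizers would have to coincide. I would also check that the interior formula agrees with the endpoints at the thresholds, which confirms continuity of $m^*$ in $\kappa$.

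Finally, I interpret the regimes. $m^*=m_0$ means no additional investment. In the interior case $m^*<M$, so $m\mu<\theta$ and Theorem~\ref{thm:strategic-attacker} gives $a^*>0$: the defender invests but attack continues. $m^*=M$ gives $m\mu=\theta$, which is exactly deterrence.
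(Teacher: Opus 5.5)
Your proposal is correct and follows essentially the same route as the paper's proof: you discard rates above $M$ because the loss term vanishes there and cost rises at slope $\kappa>0$, then use strict convexity on $[m_0,M]$, with derivative $\kappa - A/(2\sqrt{Mm})$, and read off the three regimes from the sign of this derivative at $m_0$ and at $M^-$. Your explicit uniqueness argument at the threshold values of $\kappa$ (strict convexity on $[m_0,M]$ together with strict increase on $[M,\infty)$) is somewhat more careful than the paper's, which treats these boundary cases in a single line.
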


\begin{proof}
If $m_0\geq M$, expected compromise loss is already zero and
additional investment only increases cost. Otherwise, no rate
above $M$ can be optimal, since it costs more than $M$ without
reducing loss further.

On $[m_0,M)$, let $\Phi(m)$ denote the objective in
\eqref{eq:strategic-scalar}. Its derivatives are
\[
\Phi'(m)=\kappa-\frac{A}{2\sqrt{Mm}},
\qquad
\Phi''(m)=\frac{A}{4\sqrt{M}\,m^{3/2}}>0.
\]
If $\Phi'(m_0)\geq0$, the minimum occurs at $m_0$.
If $\Phi'(M^-)\leq0$, the objective decreases up to $M$,
where the minimum occurs. In the remaining case, its unique
minimum is the interior solution of $\Phi'(m)=0$, namely
$A^2/(4\kappa^2M)$. These conditions give the stated thresholds,
including the boundary cases.
\end{proof}

When proactive capability is expensive, the defender accepts the residual risk under its baseline investment. At intermediate costs, it purchases additional protection but stops before eliminating the attacker's incentive. When capability is sufficiently inexpensive, it purchases exactly enough to deter search.

For a baseline that does not already deter, the condition for optimal deterrence can also be written as
\[
\kappa\leq\frac{Lc_A}{2vr}.
\]
A greater loss to the defender increases its willingness to
pay for deterrence. A greater cost of attacker search also
favors deterrence, while greater attacker value or search
productivity makes it harder to justify. 





\section{Joint Investment in Discovery and Repair}
\label{sec:joint-investment}

Holding repair capability fixed makes the cost regimes easy to
interpret, but a defender can often invest in both finding
vulnerabilities and repairing successful exploits. We now characterize
this joint decision. In particular, we distinguish the least cost of
deterring an attack from the decision about whether deterrence is
worth purchasing.

Suppose at least one proactive mechanism has positive effectiveness
and $q_I>0$. Recall that $\kappa$ is the least cost of increasing
the proactive discovery rate by one unit, and define
$\eta=k_I/q_I$ as the corresponding cost for the repair rate.
The investment problem reduces to
\begin{equation}
\begin{aligned}
&\min_{m\geq m_0,\,\mu\geq\delta}\ \Phi(m,\mu),\\
\Phi(m,\mu)={}&\kappa(m-m_0)+\eta(\mu-\delta)\\
&+\frac{L}{\mu}\left[1-\sqrt{\frac{m\mu}{\theta}}\right]_+,
\end{aligned}
\label{eq:joint-objective}
\end{equation}
where $\theta=vr/c_A$. The rates $m$ and $\mu$ can be implemented
through the least expensive proactive mechanism and incident response.
If a capability cannot be increased, its rate remains fixed and the
problem reduces to the corresponding boundary case.

\subsection{The least cost of deterrence}

Deterrence requires $m\mu\geq\theta$. Once this condition is
satisfied, additional capability does not further reduce the
modeled compromise loss. The cheapest deterrent allocation therefore
balances the costs of the two ways to reach this threshold.

\begin{proposition}[Least-cost deterrence]
\label{prop:least-cost-deterrence}
If $m_0\delta\geq\theta$, the baseline allocation deters attack
at zero incremental cost. Otherwise, the unique least-cost
deterrent rates are
\begin{align}
m_D&=\min\left\{\frac{\theta}{\delta},
       \max\left\{m_0,\sqrt{\frac{\eta\theta}{\kappa}}\right\}\right\},
\label{eq:least-cost-deterrence}\\
\mu_D&=\frac{\theta}{m_D}.
\nonumber
\end{align}
Their cost is
$C_D^{\mathrm{det}}=\kappa(m_D-m_0)+\eta(\mu_D-\delta)$.
If both rates exceed their baselines, this simplifies to
\begin{equation}
C_D^{\mathrm{det}}=2\sqrt{\kappa\eta\theta}
-\kappa m_0-\eta\delta.
\label{eq:deterrence-cost-interior}
\end{equation}
\end{proposition}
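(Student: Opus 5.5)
The plan is to reduce the least-cost deterrence problem to a one-variable convex minimization along the deterrence boundary. We minimize $\kappa(m-m_0)+\eta(\mu-\delta)$ over $m\geq m_0$, $\mu\geq\delta$ subject to $m\mu\geq\theta$.

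If $m_0\delta\geq\theta$, the baseline point is feasible and has cost zero. Since the cost is nonnegative on the feasible set, the baseline is optimal, which settles the first case.

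Assume now $m_0\delta<\theta$. First I would show that any minimizer satisfies $m\mu=\theta$. Suppose instead $m\mu>\theta$ at a feasible point. Either $m>m_0$ or $\mu>\delta$, since otherwise $m\mu=m_0\delta<\theta$. We can then lower whichever coordinate exceeds its baseline slightly while keeping $m\mu\geq\theta$, which strictly reduces cost because $\kappa,\eta>0$.

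So I substitute $\mu=\theta/m$. The constraint $\mu\geq\delta$ becomes $m\leq\theta/\delta$, and the interval $[m_0,\theta/\delta]$ is nonempty because $m_0<\theta/\delta$. The problem is now to minimize
\[
\psi(m)=\kappa(m-m_0)+\eta\left(\frac{\theta}{m}-\delta\right)
\]
over $m_0\leq m\leq\theta/\delta$.

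Next I would note that $\psi''(m)=2\eta\theta/m^3>0$, so $\psi$ is strictly convex on $(0,\infty)$. Its unconstrained minimizer solves $\kappa=\eta\theta/m^2$, giving $m=\sqrt{\eta\theta/\kappa}$. For a strictly convex function on an interval, the unique constrained minimizer is the projection of the unconstrained minimizer onto the interval. That projection is $\min\{\theta/\delta,\max\{m_0,\sqrt{\eta\theta/\kappa}\}\}$, which is exactly $m_D$ in \eqref{eq:least-cost-deterrence}. Setting $\mu_D=\theta/m_D$ recovers the second rate.

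Uniqueness of $(m_D,\mu_D)$ follows from two facts. The binding-constraint argument places every minimizer on the curve $m\mu=\theta$, and $\psi$ is strictly convex along that curve.

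For the interior formula, suppose both rates exceed their baselines. Then $m_D=\sqrt{\eta\theta/\kappa}$ and $\mu_D=\sqrt{\kappa\theta/\eta}$. Substituting gives $\kappa m_D=\eta\mu_D=\sqrt{\kappa\eta\theta}$, which yields \eqref{eq:deterrence-cost-interior}.

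The main obstacle is the first step: justifying that the deterrence constraint binds and that clipping is the correct treatment of the two lower bounds. One point needs care. When $m>m_0$ but $\mu=\delta$, the decrease must be applied to $m$, not $\mu$. The fact that the feasible set in $m$ is an interval relies on the nonemptiness check $m_0\leq\theta/\delta$, which is where the hypothesis $m_0\delta<\theta$ enters.
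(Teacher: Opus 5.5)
Your proposal is correct and is essentially the paper's proof: positive costs force the deterrence constraint to bind, substituting $\mu=\theta/m$ gives a strictly convex function on $[m_0,\theta/\delta]$, and clipping $\sqrt{\eta\theta/\kappa}$ to that interval yields $m_D$. You also spell out what the paper leaves implicit (the baseline case, uniqueness, and the interior cost computation), and the one step both arguments take for granted, that a two-dimensional minimizer exists, follows at once because the cost is continuous with bounded sublevel sets on the closed feasible set.
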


\begin{proof}
Suppose $m_0\delta<\theta$. Positive investment costs imply
that a least-cost deterrent satisfies $m\mu=\theta$:
any strict excess permits a reduction in a rate above its baseline.
Substituting $\mu=\theta/m$ leaves
\[
\min_{m\in[m_0,\theta/\delta]}
\left\{\kappa(m-m_0)+\eta\left(\frac{\theta}{m}-\delta\right)\right\}.
\]
The derivative is $\kappa-\eta\theta/m^2$ and the second
derivative is $2\eta\theta/m^3>0$. Restricting the unconstrained
minimizer $\sqrt{\eta\theta/\kappa}$ to the feasible interval
gives the result.
\end{proof}

When proactive capability is expensive relative to repair, the
solution has $m_D=m_0$ and relies on faster repair. When repair
is expensive, it has $\mu_D=\delta$ and relies on discovery.
Between these cases, both receive investment. This proposition
answers how to deter at least cost; it does not yet establish
that deterrence minimizes investment plus expected loss.

\subsection{When each investment regime is optimal}

For an allocation at which attack continues, write
\begin{equation}
s=\sqrt{\frac{m\mu}{\theta}},\qquad
b=\frac{Ls}{2m\mu},\qquad
e=\frac{L}{\mu^2}\left(1-\frac{s}{2}\right).
\label{eq:joint-marginal-benefits}
\end{equation}
The quantities $b$ and $e$ are the marginal reductions in expected
loss from increasing discovery and repair, respectively. They
include the attacker's response to the changed rates. Comparing
these benefits with $\kappa$ and $\eta$ identifies the equilibrium.

\begin{proposition}
\label{prop:joint-regimes}
If $m_0\delta\geq\theta$, no additional investment is optimal.
If $m_0\delta<\theta$, let $b_0,e_0$ be the values in \eqref{eq:joint-marginal-benefits} at the baseline. Then the following characterize the optimal rates.
\begin{enumerate}[label=(\roman*),leftmargin=*,nolistsep]
\item The baseline is optimal if and only if
      $\kappa\geq b_0$ and $\eta\geq e_0$.
\item Deterrence is optimal if and only if
\begin{equation}
\kappa\leq\frac{L}{2\theta}
\quad\text{or}\quad
\eta\leq\frac{Lm_0^2}{2\theta^2}.
\label{eq:joint-deterrence-region}
\end{equation}
      In this case the rates are $(m_D,\mu_D)$ from
      Proposition~\ref{prop:least-cost-deterrence}.
\item Otherwise, attack continues with additional investment in
      one or both capabilities. A proactive-only solution has
\[
\mu^*=\delta,\qquad
m^*=\frac{L^2}{4\kappa^2\theta\delta}>m_0,
\]
      and is optimal precisely when $m^*\delta<\theta$ and
      $\eta\geq e(m^*,\delta)$.
      A response-only solution has $m^*=m_0$ and the unique
      $\mu^*\in(\delta,\theta/m_0)$ solving
\[
\eta=\frac{L}{(\mu^*)^2}
\left(1-\frac12\sqrt{\frac{m_0\mu^*}{\theta}}\right),
\]
and is optimal when $\kappa\geq b(m_0,\mu^*)$. A solution funding both capabilities while attack continues is
\begin{equation}
\begin{split}
s^*&=\frac{L}{2\kappa\theta},\\
\mu^*&=\sqrt{\frac{L(1-s^*/2)}{\eta}},\qquad
m^*=\frac{\theta(s^*)^2}{\mu^*}.
\end{split}
\label{eq:joint-active-solution}
\end{equation}
      This solution applies when $s^*<1$, $m^*>m_0$, and
      $\mu^*>\delta$.
\end{enumerate}
Equalities with a baseline describe the boundaries between
investment regimes.
\end{proposition}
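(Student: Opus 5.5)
The plan is to treat \eqref{eq:joint-objective} as a convex program over the closed orthant $[m_0,\infty)\times[\delta,\infty)$. Convexity follows from Proposition~\ref{prop:strategic-leader}, since the rates are affine in $d$ and the objective is convex in them. Existence of a minimizer follows from the same coercivity argument. Then a point is optimal if and only if some subgradient $(-b,-e)$ of $F$ at that point satisfies the KKT conditions with the box constraints:
\[
\kappa\geq b,\quad (m-m_0)(\kappa-b)=0,\qquad \eta\geq e,\quad (\mu-\delta)(\eta-e)=0.
\]
This is exactly \eqref{eq:strategic-mechanism-conditions} after aggregating mechanisms. In the active region $m\mu<\theta$, $F$ is smooth and $(b,e)$ is given by \eqref{eq:joint-marginal-benefits}. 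In the deterred interior, $b=e=0$. On the boundary $m\mu=\theta$, the subdifferential is $\{\tau(-L/(2m\mu),-L/(2\mu^2)):\tau\in[0,1]\}$, because $F=\max\{0,F_{\text{active}}\}$ locally. The case $m_0\delta\geq\theta$ is immediate: $F\equiv0$ on the feasible set and costs are increasing, so the baseline is optimal.

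For (i), the baseline lies in the active region, so $F$ is differentiable there. Optimality at the corner is then exactly $\kappa\geq b_0$ and $\eta\geq e_0$.

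For (ii), I first note that no point with $m\mu>\theta$ other than the baseline can be optimal, since costs could be cut. Hence a deterrent optimum lies on $m\mu=\theta$. It must then be the least-cost deterrent $(m_D,\mu_D)$ of Proposition~\ref{prop:least-cost-deterrence}, because $F=0$ on the whole deterrence set. On the boundary, use $m\mu=\theta$ to write $b=\tau L/(2\theta)$ and $e=\tau L m^2/(2\theta^2)$. I then split by which rates of $(m_D,\mu_D)$ exceed their baselines.
\begin{itemize}
\item If both are interior, stationarity forces $\kappa=\tau L/(2\theta)$ and $\eta=\tau Lm_D^2/(2\theta^2)$. These two equations are consistent precisely because $m_D^2=\eta\theta/\kappa$. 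Feasibility $\tau\leq1$ is equivalent to $\kappa\leq L/(2\theta)$.
\item If $m_D=m_0<\theta/\delta$, stationarity in $\mu$ gives $\tau=2\theta^2\eta/(Lm_0^2)$, so $\tau\leq1$ iff $\eta\leq Lm_0^2/(2\theta^2)$. The remaining inequality $\kappa\geq\tau L/(2\theta)=\eta\theta/m_0^2$ holds automatically, since $m_D=m_0$ means $\sqrt{\eta\theta/\kappa}\leq m_0$.
\item The case $\mu_D=\delta$ is symmetric and yields $\kappa\leq L/(2\theta)$. The other inequality follows from $m_D=\theta/\delta$.
\end{itemize}
Finally, I check that in each case the condition from the other branch of the ``or'' also implies the one that is active. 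For example, if $m_D=m_0$ and $\kappa\leq L/(2\theta)$, then $\eta\leq\kappa m_0^2/\theta\leq Lm_0^2/(2\theta^2)$. This shows that the disjunction \eqref{eq:joint-deterrence-region} is necessary and sufficient in all cases.

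For (iii), the optimum lies in the active region and is not the baseline, so I solve the smooth KKT system. The key simplification is $m\mu=\theta s^2$, which gives $b=L/(2\theta s)$. Setting $b=\kappa$ then gives $s^*=L/(2\kappa\theta)$. With $\mu=\delta$, this yields $m^*=\theta (s^*)^2/\delta=L^2/(4\kappa^2\theta\delta)$. Proactive-only optimality then needs $m^*\delta<\theta$ (activity), $m^*>m_0$, and $\eta\geq e(m^*,\delta)$. For response-only, I show that $\mu\mapsto e(m_0,\mu)$ is continuous and strictly decreasing on $(\delta,\theta/m_0)$, since both $\mu^{-2}$ and $1-s/2$ decrease in $\mu$. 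This gives a unique root $\mu^*$ whenever one exists, together with the condition $\kappa\geq b(m_0,\mu^*)$. For both funded, $b=\kappa$ fixes $s^*$, and $e=\eta$ becomes $L(1-s^*/2)/\mu^2=\eta$, which gives \eqref{eq:joint-active-solution}. The validity conditions are $s^*<1$, $m^*>m_0$ and $\mu^*>\delta$. Regime boundaries correspond to equalities in the complementary slackness conditions.

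The main obstacle I expect is part (ii): handling the nonsmooth boundary through the single parameter $\tau$, and proving the exact ``or'' characterization across the three shapes of $(m_D,\mu_D)$. I will also need to argue carefully that no active-region point can be optimal when \eqref{eq:joint-deterrence-region} holds, and conversely. This follows from the iff nature of KKT for convex problems, but only once the boundary subdifferential is justified rigorously via the max-representation of the positive part.
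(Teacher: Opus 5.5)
Your proposal is correct and follows essentially the same route as the paper. Both arguments use convexity to make the marginal (complementarity) conditions necessary and sufficient, parametrize the boundary subgradient by a single $\tau\in[0,1]$, note that any optimal deterrent must be the unique least-cost one, and split on the shape of $(m_D,\mu_D)$, using the least-cost inequalities to reduce the case-specific thresholds to the stated disjunction. Your version is slightly more careful than the paper's write-up in three places: you treat $\mu_D=\delta$ separately (the paper folds it into $m_D>m_0$), you keep $m^*>m_0$ as an explicit condition for the proactive-only regime, and you justify the boundary subdifferential via $F=\max\{0,\cdot\}$.
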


\begin{proof}
By convexity of \eqref{eq:joint-objective}, the marginal optimality
conditions are necessary and sufficient. In the active region they
are
\[
\begin{aligned}
\kappa&\geq b, & (m-m_0)(\kappa-b)&=0,\\
\eta&\geq e, & (\mu-\delta)(\eta-e)&=0.
\end{aligned}
\]
At the baseline they give the first claim. If only discovery is
increased, solving $\kappa=b$ gives the stated $m^*$; the unused
repair investment requires $\eta\geq e$. If only repair is
increased, solving $\eta=e$ and checking $\kappa\geq b$ gives
the response-only solution. The function $e(m_0,\mu)$ is strictly
decreasing in the active region, so its root is unique.
If both investments are positive, $b=L/(2\theta s)$ gives
$s^*=L/(2\kappa\theta)$, and $\eta=e$ gives
\eqref{eq:joint-active-solution}.

It remains to check deterrence. Any optimal deterrent must be the
least-cost one. At $m\mu=\theta$, the marginal benefit vector is
\[
\tau\left(\frac{L}{2\theta},\frac{L}{2\mu^2}\right),
\qquad 0\leq\tau\leq1.
\]
If $m_D>m_0$, complementarity requires
$\tau=2\kappa\theta/L$, so deterrence is optimal exactly when
$\kappa\leq L/(2\theta)$; the repair condition follows from
least-cost allocation. If $m_D=m_0$, then
$\mu_D=\theta/m_0>\delta$ and complementarity requires
$\tau=2\eta\theta^2/(Lm_0^2)$, giving the second threshold.
The respective least-cost cases satisfy
$\eta>\kappa m_0^2/\theta$ and
$\eta\leq\kappa m_0^2/\theta$. These inequalities make the two
case-specific criteria equivalent to the union in
\eqref{eq:joint-deterrence-region}.
\end{proof}

The result distinguishes affordability from technical feasibility.
With unbounded investment and positive effectiveness, deterrence
is feasible; the thresholds identify when buying it is optimal.
If neither threshold holds, the defender may still fund both
capabilities while accepting some compromise risk.

There is also a useful distinction within this last regime.
Equation~\eqref{eq:joint-active-solution} gives
$p^*=1-s^*$, which does not depend on $\eta$ while both
investments remain positive. Cheaper repair then increases
$\mu^*$ and reduces $m^*$, keeping their product fixed.
Compromise probability stays unchanged, but expected compromise
time $W^*=p^*/\mu^*$ falls. Better response capability can
therefore improve the security outcome even when the frequency
of compromise does not change.

\section{Numerical Experiments}
\label{sec:strategic-experiments}

We now study how the equilibrium changes with defense and attack costs, then simulate the discovery and repair process at selected equilibria. The cost sweeps solve deterministic optimization problems. The final experiment samples the random waiting times underlying the model.

We use normalized units with baseline rates $m_0=\delta=1$,
attacker productivity $r=1$, attacker value $v=4$, and defender
loss $L=8$. Except when varied explicitly, $c_A=1$, so
$\theta=4$. Costs are per modeled opportunity and rates are per
normalized time unit. These parameters are illustrative rather
than calibrated to empirical attack data.

For each parameter setting, we compute the optimum by evaluating
the baseline, the optimum on each baseline edge, the least-cost
deterrent, and the feasible interior solution from
Proposition~\ref{prop:joint-regimes}. The response-only edge
requires a one-dimensional root solve; the other candidates are
explicit. We select the candidate with the smallest total objective.

We compare this calculation with constrained numerical optimization on 300 randomized parameter settings. We draw each of $\theta,L,\kappa,\eta,m_0,\delta$ independently and uniformly on a logarithmic scale over $[0.1,10]$. The numerical formulation introduces a nonnegative loss variable and uses two starting allocations. All settings pass the marginal
optimality checks and the deterrence criterion. The largest objective discrepancy, divided by $1+|\Phi^*|$, is below $3\times10^{-11}$.

\subsection{Experiment 1: investment regimes}

We vary $\kappa$ and $\eta$ independently over $[0.125,16]$, giving 25,921 equilibria in total. Figure~\ref{fig:cost-regimes} shows the optimal regime and the resulting expected compromise time, where we see all five predicted regimes. At the baseline, the marginal benefits are $b_0=2$ and $e_0=6$. Accordingly, no additional investment is optimal when $\kappa\geq2$ and $\eta\geq6$. The deterrence condition simplifies to $ \kappa\leq1\quad\text{or}\quad\eta\leq0.25$. Thus either sufficiently inexpensive discovery or sufficiently inexpensive repair makes deterrence optimal.

Between the baseline and deterrence regions, the defender invests in one or both capabilities while attack remains profitable. For example, $(\kappa,\eta)=(1.5,2)$ gives $(m^*,\mu^*)=(1.089,1.633)$, compromise probability $1/3$, and expected compromise time $0.204$. Positive investment in both
capabilities therefore need not imply deterrence. Across this grid, expected compromise time ranges from zero to the baseline value $0.5$ time units per opportunity.

\begin{figure}[t]
\centering
\includegraphics[width=.5\columnwidth]{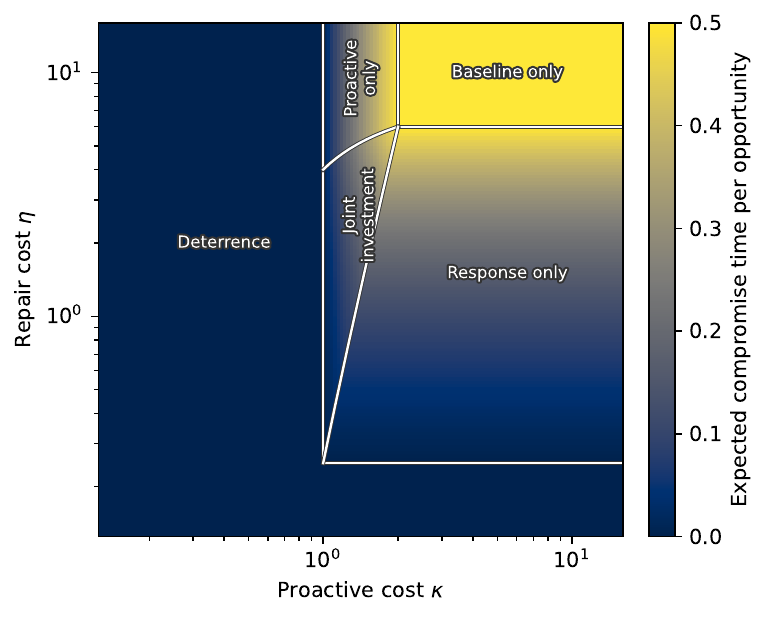}
\caption{Equilibrium outcomes as proactive and repair costs vary, with $m_0=\delta=r=c_A=1$, $v=4$, and $L=8$. The heatmap shows expected compromise time per opportunity, and white lines indicate investment regimes. Deterred equilibria are grouped together.}
\label{fig:cost-regimes}
\end{figure}

\subsection{Experiment 2: effect of attacker cost on defensive investment}

Next, we fix $\kappa=0.6$ and $\eta=1.2$ and vary $c_A$ over $[0.125,8]$. We compare joint optimization with a benchmark that fixes $\mu=\delta=1$ and optimizes discovery alone.
Figure~\ref{fig:attack-cost-sweep} reports the defensive rates, attacker effort, expected compromise time, and defender objective.

When attacks are inexpensive, joint optimization places substantial
weight on shortening their duration. At $c_A=0.125$, the optimum
has $m^*=1$ and $\mu^*=2.40$: only response receives additional
investment. Compared to fixed repair, expected compromise time goes down ($0.79$ vs $0.30$), as does total defender cost ($6.57$ vs $4.10$). This comparison accounts for the extra response investment as well as the reduction in expected loss.

At $c_A=0.25$, both capabilities receive investment. The joint solution is $(m^*,\mu^*)=(1.21,2.30)$, whereas fixed repair requires $m^*=2.78$. Both solutions have compromise probability $0.58$, but their expected compromise times are $0.25$ and
$0.58$, respectively. The joint solution improves the outcome through shorter compromises, even though their probability is unchanged. This is the constant-product behavior identified after Proposition~\ref{prop:joint-regimes}.

Both models reach deterrence at $c_A=0.6$. Allowing response investment reduces the cost of deterrence rather than shifting this particular threshold. At that boundary, joint investment costs $2.58$, compared with $3.40$ under fixed repair. Once $c_A\geq4$, the baseline itself deters attack and neither model purchases additional capability. The discovery rate is
nonmonotone over the sweep: it initially increases as stronger
proactive protection becomes worthwhile, then decreases once
deterrence can be maintained with less investment.

\begin{figure*}[t]
\centering
\includegraphics[width=\textwidth]{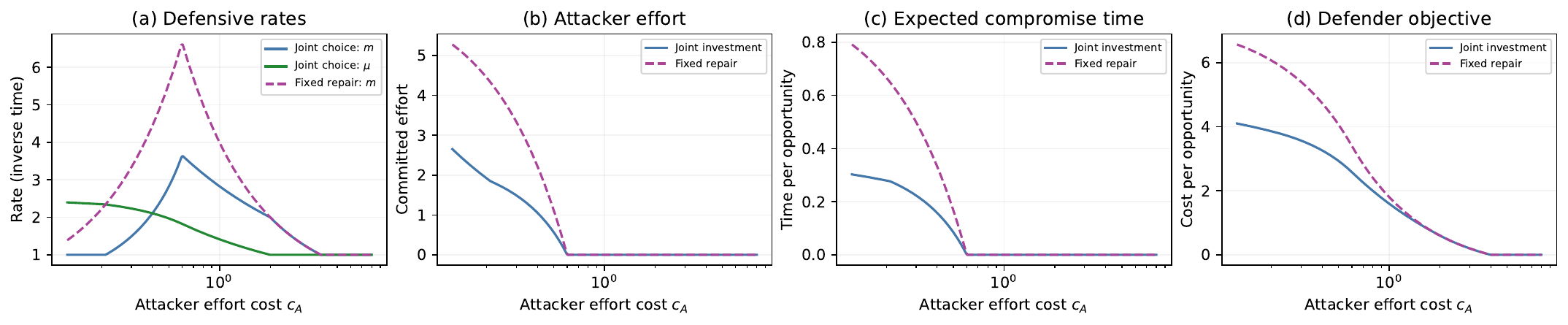}
\caption{Response to changing attacker effort cost with
$\kappa=0.6$ and $\eta=1.2$. Joint investment chooses both rates;
the benchmark fixes $\mu=1$ and optimizes discovery. All remaining
parameters have their baseline values. The two models deter at $c_A=0.6$ in this experiment, but can have substantially different costs and compromise durations.}
\label{fig:attack-cost-sweep}
\end{figure*}

Figure~\ref{fig:stochastic-validation} shows close agreement between
the simulated outcomes and the analytical predictions, both for
compromise probability and for mean compromise time. The duration
tails also follow the predicted survival function,
$$  \Prob(T>t)=p^*e^{-\mu^*t}, \qquad t\geq0. $$
We report Wilson score intervals for compromise probability and
approximate normal intervals for mean duration, both at the
95\% confidence level.

The survival curves distinguish the two ways investment reduces
exposure. Proactive-only investment lowers the probability that
compromise begins, but leaves its conditional duration unchanged:
the tail starts lower and decays at the baseline rate. Response
investment increases the decay rate, reducing the chance that a
compromise persists for a long time. At equilibrium, faster repair
also reduces the attacker's incentive to invest, so its benefit
extends to the probability of compromise. Joint investment combines
these effects.

These results explain why mean compromise time should be read
alongside the duration distribution. A lower mean can reflect
fewer compromises, shorter compromises, or both; the survival
curves reveal which mechanism produces the improvement. In
particular, faster repair limits prolonged exposure even in
regimes where attack remains profitable.

\begin{table}[t]
\centering
\begin{minipage}[t]{0.4\columnwidth}
\vspace{0pt}
\centering
\small
\setlength{\tabcolsep}{3pt}
\begin{tabular}{@{}lrrl@{}}
\hline
\multicolumn{4}{c}{\textbf{Compromise probability}} \\
Scenario & Theory & Simulation & 95\% interval \\
\hline
Baseline & 0.5000 & 0.4979 & [0.4957, 0.5000] \\
Proactive & 0.3333 & 0.3331 & [0.3310, 0.3351] \\
Response & 0.3581 & 0.3586 & [0.3565, 0.3607] \\
Joint & 0.3333 & 0.3344 & [0.3323, 0.3364] \\
Deterrence & 0.0000 & 0.0000 & [0.0000, 0.0000] \\
\hline
\multicolumn{4}{c}{\textbf{Mean compromise time}} \\
Scenario & Theory & Simulation & 95\% interval \\
\hline
Baseline & 0.5000 & 0.4974 & [0.4936, 0.5012] \\
Proactive & 0.3333 & 0.3343 & [0.3310, 0.3376] \\
Response & 0.2173 & 0.2189 & [0.2169, 0.2210] \\
Joint & 0.2041 & 0.2043 & [0.2023, 0.2063] \\
Deterrence & 0.0000 & 0.0000 & [0.0000, 0.0000] \\
\hline
\end{tabular}

\end{minipage}
\hspace{5em}
\begin{minipage}[t]{0.4\columnwidth}
\vspace{0pt}
\centering
\includegraphics[width=\linewidth]{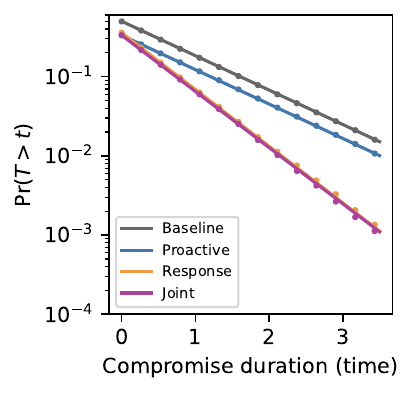}
\end{minipage}
\caption{(Left) theoretical and simulated compromise probabilities and mean compromise times with 95\% intervals. Durations are in normalized time units per opportunity. (Right) analytical survival probabilities (lines) and empirical values (dots) for compromise duration.
}
\label{fig:stochastic-validation}
\end{table}

\section{Towards Verifiable Feedback and Retention}
\label{sec:discussion}

The sufficient conditions developed in this paper assume the perfect scenario when all detected vulnerabilities are true positives that get repaired, and earlier repairs remain effective. Applying these results to a deployed system therefore requires a way to establish that its testing and update procedures satisfy those conditions. A detector reporting a failure does not by itself establish a vulnerability, and an update described
as a repair does not establish that the vulnerability has been removed. Likewise, keeping a record of earlier repairs does not ensure that their protection survives subsequent updates. These distinctions motivate a \textit{verification layer} connecting the evidence produced by testing to the claims made about the evolving defense.


A useful feedback record should explain the failure, the specific policy it violates, and the desired behaviour. For an interaction with an AI system, this requires preserving the
relevant input and conversation 
context, the observed response, the system and defense versions, and the criterion used to judge the outcome. The detector's decision is evidence to be assessed against that criterion. This assessment can be automated for verifiable requirements, whereas contextual judgments about harmfulness may require additional adjudication. In either case, the requirement must be explicit enough to distinguish
a genuine repair from a change in how failures are labelled.

Resolution of a correctly identified vulnerability should also be verifiable. For a randomized system, one successful retest is insufficient to establish reliable protection. A defensive claim must specify an acceptable failure probability and the evidence
supporting it.

Retention requires carrying these obligations forward. An update that fixes the latest failure can still invalidate an earlier repair, so checking only newly discovered attacks does not establish the monotonicity condition $D_t\subseteq D_{t+1}$. A practical starting point is to maintain a growing collection of accepted adversarial cases together with benign cases that constrain unnecessary blocking. Let $A_t$ and $B_t$ denote these collections at round $t$, and let $c_{t+1}(x)=1$ mean that a proposed input filter
blocks prompt $x$. A finite verification contract is
\[
  c_{t+1}(x)=1 \quad \forall x\in A_t,
  \qquad
  c_{t+1}(x)=0 \quad \forall x\in B_t.
\]
Newly accepted findings extend $A_t$, while earlier cases remain part of the contract. Every subsequent update must therefore satisfy both the new repair obligation and the retained obligations. The benign cases ensure that blocking all inputs cannot satisfy
the contract. Finally, extending such guarantees to a family of attacks would require a specification and an argument covering that family.

As a proof of concept, we implement a verification layer in Lean on top of a number of attacks in garak \cite{garak} (se to illustrate how such obligations can be represented and checked in an automated manner\footnote{\url{https://github.com/rajlakshmichavan/garak-in-lean}}.

\section{Conclusion}

In this paper, we investigate the feasibility of vulnerability discovery mechanisms and the investment a defender needs to make for these mechanisms to effectively defend an AI system in deployment. 
%
Building on this framework requires studying the full feedback loop in deployed systems. Longitudinal evaluations should track whether discoveries lead to effective repairs, whether those repairs generalize, and how often later updates reintroduce earlier failures.
Our verification layer implementation provides a starting point for making these claims checkable. Extending it to versioned repair histories and independently validated test cases would  help connect the assumptions of the theory to evidence from practice.

A further direction is to make investment decisions part of the evolving feedback process. Defenders learn about attack costs, discovery effectiveness, and repair reliability as testing proceeds, while attackers adapt to the resulting protections. A repeated game
with uncertain capabilities, imperfect retention, and constraints on benign-task performance would allow investment to respond to that evidence. Such an extension could guide when to expand testing, when to prioritize repair, and when existing protection needs to be reassessed.

\bibliographystyle{abbrvnat}
\bibliography{reference}

@inproceedings{rainbowteaming,
 author = {Samvelyan, Mikayel and Raparthy, Sharath Chandra and Lupu, Andrei and Hambro, Eric and Markosyan, Aram H. and Bhatt, Manish and Mao, Yuning and Jiang, Minqi and Parker-Holder, Jack and Foerster, Jakob and Rockt\"{a}schel, Tim and Raileanu, Roberta},
 booktitle = {Advances in Neural Information Processing Systems},
 doi = {10.52202/079017-2229},
 editor = {A. Globerson and L. Mackey and D. Belgrave and A. Fan and U. Paquet and J. Tomczak and C. Zhang},
 pages = {69747--69786},
 publisher = {Curran Associates, Inc.},
 title = {Rainbow Teaming: Open-Ended Generation of Diverse Adversarial Prompts},
 volume = {37},
 year = {2024}
}

@techreport{vassilev2025taxonomy,
  author = {Vassilev, Apostol and Oprea, Alina and Fordyce, Alie
            and Anderson, Hyrum and Davies, Xander and Hamin, Maia},
  title = {Adversarial Machine Learning: A Taxonomy and Terminology
           of Attacks and Mitigations},
  institution = {National Institute of Standards and Technology},
  number = {NIST AI 100-2e2025},
  year = {2025},
  doi = {10.6028/NIST.AI.100-2e2025}
}

@inproceedings{zhan2024injecagent,
  author = {Zhan, Qiusi and Liang, Zhixiang and Ying, Zifan
            and Kang, Daniel},
  title = {{InjecAgent}: Benchmarking Indirect Prompt Injections
           in Tool-Integrated Large Language Model Agents},
  booktitle = {Findings of the Association for Computational
               Linguistics: ACL 2024},
  pages = {10471--10506},
  publisher = {Association for Computational Linguistics},
  year = {2024},
  doi = {10.18653/v1/2024.findings-acl.624},
}

@techreport{souppaya2022patch,
  author = {Souppaya, Murugiah and Scarfone, Karen},
  title = {Guide to Enterprise Patch Management Planning:
           Preventive Maintenance for Technology},
  institution = {National Institute of Standards and Technology},
  number = {NIST SP 800-40 Rev. 4},
  year = {2022},
  doi = {10.6028/NIST.SP.800-40r4}
}

@misc{pyrit,
      title={PyRIT: A Framework for Security Risk Identification and Red Teaming in Generative AI System}, 
      author={Gary D. Lopez Munoz and Amanda J. Minnich and Roman Lutz and Richard Lundeen and Raja Sekhar Rao Dheekonda and Nina Chikanov and Bolor-Erdene Jagdagdorj and Martin Pouliot and Shiven Chawla and Whitney Maxwell and Blake Bullwinkel and Katherine Pratt and Joris de Gruyter and Charlotte Siska and Pete Bryan and Tori Westerhoff and Chang Kawaguchi and Christian Seifert and Ram Shankar Siva Kumar and Yonatan Zunger},
      year={2024},
      eprint={2410.02828},
      archivePrefix={arXiv},
      primaryClass={cs.CR},
      url={https://arxiv.org/abs/2410.02828}, 
}

@misc{garak,
      title={garak: A Framework for Security Probing Large Language Models}, 
      author={Leon Derczynski and Erick Galinkin and Jeffrey Martin and Subho Majumdar and Nanna Inie},
      year={2024},
      eprint={2406.11036},
      archivePrefix={arXiv},
      primaryClass={cs.CL},
      url={https://arxiv.org/abs/2406.11036}, 
}

@article{vassilev2026robust,
  author  = {Apostol Vassilev},
  title   = {Robust {AI} Security and Alignment:
             A Sisyphean Endeavor?},
  journal = {IEEE Security \& Privacy},
  year    = {2026},
  volume  = {24},
  number  = {3},
  pages   = {52--58},
  doi     = {10.1109/MSEC.2026.3678214}
}

@inproceedings{perez2022red,
  title = {Red Teaming Language Models with Language Models},
  author = {Perez, Ethan and Huang, Saffron and Song, Francis and Cai, Trevor and Ring, Roman and Aslanides, John and Glaese, Amelia and McAleese, Nat and Irving, Geoffrey},
  booktitle = {Proceedings of the 2022 Conference on Empirical Methods in Natural Language Processing},
  year = {2022},
  pages = {3419--3448},
  publisher = {Association for Computational Linguistics},
  doi = {10.18653/v1/2022.emnlp-main.225},
}

@inproceedings{hong2024curiosity,
  title = {Curiosity-driven Red-teaming for Large Language Models},
  author = {Hong, Zhang-Wei and Shenfeld, Idan and Wang, Tsun-Hsuan and Chuang, Yung-Sung and Pareja, Aldo and Glass, James and Srivastava, Akash and Agrawal, Pulkit},
  booktitle = {International Conference on Learning Representations},
  year = {2024},
}

@inproceedings{ge2024mart,
  title = {{MART}: Improving {LLM} Safety with Multi-round Automatic Red-Teaming},
  author = {Ge, Suyu and Zhou, Chunting and Hou, Rui and Khabsa, Madian and Wang, Yi-Chia and Wang, Qifan and Han, Jiawei and Mao, Yuning},
  booktitle = {Proceedings of the 2024 Conference of the North American Chapter of the Association for Computational Linguistics: Human Language Technologies (Volume 1: Long Papers)},
  year = {2024},
  pages = {1927--1937},
  publisher = {Association for Computational Linguistics},
  doi = {10.18653/v1/2024.naacl-long.107},
}

@inproceedings{mazeika2024harmbench,
  title = {{HarmBench}: A Standardized Evaluation Framework for Automated Red Teaming and Robust Refusal},
  author = {Mazeika, Mantas and Phan, Long and Yin, Xuwang and Zou, Andy and Wang, Zifan and Mu, Norman and Sakhaee, Elham and Li, Nathaniel and Basart, Steven and Li, Bo and Forsyth, David and Hendrycks, Dan},
  booktitle = {Proceedings of the 41st International Conference on Machine Learning},
  series = {Proceedings of Machine Learning Research},
  volume = {235},
  pages = {35181--35224},
  year = {2024},
  publisher = {PMLR},
}

@inproceedings{qi2025safety,
  title = {Safety Alignment Should Be Made More Than Just a Few Tokens Deep},
  author = {Qi, Xiangyu and Panda, Ashwinee and Lyu, Kaifeng and Ma, Xiao and Roy, Subhrajit and Beirami, Ahmad and Mittal, Prateek and Henderson, Peter},
  booktitle = {International Conference on Learning Representations},
  year = {2025},
}

@inproceedings{majumdar2025red,
  title = {Red Teaming {AI} Red Teaming},
  author = {Majumdar, Subhabrata and Pendleton, Brian and Gupta, Abhishek},
  booktitle = {Proceedings of the 2025 Conference on Applied Machine Learning for Information Security},
  series = {Proceedings of Machine Learning Research},
  volume = {299},
  pages = {66--86},
  publisher = {PMLR},
  year = {2025},
}

@inproceedings{anil2024many,
  title = {Many-shot Jailbreaking},
  author = {Anil, Cem and Durmus, Esin and Panickssery, Nina and others},
  booktitle = {Advances in Neural Information Processing Systems},
  volume = {37},
  year = {2024},
}

@inproceedings{kumar2024certifying,
  title = {Certifying {LLM} Safety against Adversarial Prompting},
  author = {Kumar, Aounon and Agarwal, Chirag and Srinivas, Suraj and Li, Aaron Jiaxun and Feizi, Soheil and Lakkaraju, Himabindu},
  booktitle = {Conference on Language Modeling},
  year = {2024},
}

@article{gordon2002economics,
  title = {The Economics of Information Security Investment},
  author = {Gordon, Lawrence A. and Loeb, Martin P.},
  journal = {ACM Transactions on Information and System Security},
  volume = {5},
  number = {4},
  pages = {438--457},
  year = {2002},
  doi = {10.1145/581271.581274},
}

@article{ebel2024economics,
  title = {Economics and Optimal Investment Policies of Attackers and Defenders in Cybersecurity},
  author = {Ebel, Austin and Mitra, Debasis},
  journal = {Journal of Cybersecurity},
  volume = {10},
  number = {1},
  pages = {tyae019},
  year = {2024},
  doi = {10.1093/cybsec/tyae019},
}

@article{korzhyk2011stackelberg,
  title = {{Stackelberg} vs. {Nash} in Security Games: An Extended Investigation of Interchangeability, Equivalence, and Uniqueness},
  author = {Korzhyk, Dmytro and Yin, Zhengyu and Kiekintveld, Christopher and Conitzer, Vincent and Tambe, Milind},
  journal = {Journal of Artificial Intelligence Research},
  volume = {41},
  pages = {297--327},
  year = {2011},
  doi = {10.1613/jair.3269},
}

@misc{han2026dynamic,
  title = {Toward a Dynamic {Stackelberg} Game-Theoretic Framework for Agentic {AI} Defense Against {LLM} Jailbreaking},
  author = {Han, Zhengye and Zhu, Quanyan},
  year = {2026},
  eprint = {2507.08207},
  archivePrefix = {arXiv},
  primaryClass = {cs.AI},
  note = {Version 2, revised March 2, 2026; accepted to the ICLR 2026 AIMS Workshop},
  url = {https://arxiv.org/abs/2507.08207v2}
}

@article{russinovich2025crescendo,
  title = {Great, Now Write an Article About That: The Crescendo Multi-Turn LLM Jailbreak Attack},
  author = {Russinovich, Mark and Salem, Ahmed and Eldan, Ronen},
  journal = {arXiv preprint arXiv:2404.01833},
  year = {2025},
}

@article{mehrotra2024tap,
  title = {Tree of Attacks: Jailbreaking Black-Box LLMs with Crafted Prompts},
  author = {Mehrotra, Anay and Zampetakis, Manolis and Kassianik, Paul and Nelson, Blaine and Anderson, Hyrum and Singer, Yaron and Karbasi, Amin},
  journal = {arXiv preprint arXiv:2312.02119},
  year = {2024},
}

@misc{vijil-mbert,
  title = {Model Card for Vijil Prompt Injection},
  author = {Vijil},
  year = {2025},
  url = {https://huggingface.co/vijil/mbert-prompt-injection},
  note = {Accessed: Apr 30, 2026},
}

@inproceedings{ayub2024embeddingbasedclassifiersdetectprompt,
  title = {Embedding-based classifiers can detect prompt injection attacks},
  author = {Md. Ahsan Ayub and Subhabrata Majumdar},
  booktitle = {Conference for Applied Machine Learning for Information Security},
  year = {2024},
}

\end{document}